\documentclass[oneside,article]{memoir}

\usepackage{amsmath}         % align-like environments
\usepackage{amsthm,thmtools,thm-restate} % theorem-like environments
\usepackage{enumitem}        % enumerate-like environments

\usepackage{rotating}

\usepackage[final]{listings}
\usepackage{bbm}
\usepackage{xcolor}

\definecolor{codegreen}{rgb}{0,0.6,0}
\definecolor{codegray}{rgb}{0.5,0.5,0.5}
\definecolor{codepurple}{rgb}{0.58,0,0.82}
\definecolor{backcolour}{rgb}{0.95,0.95,0.92}

\lstdefinestyle{mystyle}{
    backgroundcolor=\color{backcolour},   
    commentstyle=\color{codegreen},
    keywordstyle=\color{magenta},
    numberstyle=\tiny\color{codegray},
    stringstyle=\color{codepurple},
    basicstyle=\ttfamily\footnotesize,
    breakatwhitespace=false,         
    breaklines=true,                 
    captionpos=b,                    
    keepspaces=true,                 
    numbers=left,                    
    numbersep=5pt,                  
    showspaces=false,                
    showstringspaces=false,
    showtabs=false,                  
    tabsize=2
}

\usepackage{algorithm}
\usepackage{algpseudocode}
\usepackage{algorithmicx}
\usepackage{threeparttable}
\usepackage{float}

\usepackage{tikz}
\usetikzlibrary{matrix,arrows,positioning}
\usepackage{amsfonts}
\usepackage{parskip}
\usepackage{mathtools}
\usepackage{wrapfig}

\usepackage{tikz-cd}

\usepackage{caption, subcaption} % for subfigures

\usepackage[hidelinks]{hyperref}                   % hyperlinks
\usepackage[capitalize,nosort]{cleveref}           % named references

\usepackage{amssymb}
\usepackage[mathscr]{euscript}
\usepackage{relsize}
\usepackage{stmaryrd}
\usepackage{stackengine}

\usepackage{indentfirst} % indent the first paragraph
\usepackage{multicol}

\usepackage{graphicx}
\usepackage{ragged2e} % for nicer left alignment inside columns
\usepackage[inline,nomargin,draft]{fixme}

\isopage[12]

\setlrmargins{*}{*}{1}
\checkandfixthelayout

\counterwithout{section}{chapter}
\usepackage{appendix}

\makeatletter
\providecommand{\abx@aux@refcontext}[1]{}
\providecommand{\abx@aux@cite}[2]{}
\providecommand{\abx@aux@segm}[3]{}
\providecommand{\abx@aux@page}[2]{}
\providecommand{\abx@aux@fnpage}[2]{}
\providecommand{\abx@aux@backref}[5]{}
\providecommand{\abx@aux@defaultrefcontext}[3]{}
\providecommand{\abx@aux@read@bbl@mdfivesum}[1]{}
\providecommand{\abx@aux@read@bblrerun}{}
\providecommand{\abx@aux@sortscheme}[1]{}
\providecommand{\abx@aux@refsection}[1]{}
\providecommand{\abx@aux@number}[2]{}
\makeatother

\makeatletter
\let\dgm\@undefined
\makeatother

  \newcommand{\id}[1][]{\operatorname{id}_{#1}}
  
  \declaretheorem[style=definition,within=section]{definition}
  \declaretheorem[style=definition,numberlike=definition]{example}
  \declaretheorem[style=definition,numberlike=definition]{remark}
  \declaretheorem[style=definition,numberlike=definition]{notation}

  \declaretheorem[style=plain,numberlike=definition]{lemma}
  \declaretheorem[style=plain,numberlike=definition]{proposition}
  \declaretheorem[style=plain,numberlike=definition]{theorem}

  \declaretheorem[style=plain,numbered=no,name=Theorem]{theorem*}

  \Crefname{corollary}{Corollary}{Corollaries}
  \Crefname{definition}{Definition}{Definitions}
  \Crefname{lemma}{Lemma}{Lemmas}
  \Crefname{proposition}{Proposition}{Propositions}
  \Crefname{remark}{Remark}{Remarks}
  \Crefname{theorem}{Theorem}{Theorems}
  \Crefname{notation}{Notation}{Notations}
  \Crefname{conjecture}{Conjecture}{Conjectures}

  \newlist{axioms}{enumerate}{1}
  \Crefname{axiomsi}{}{}

  \newenvironment{tikzeq*}
  {
    \begingroup
    \begin{equation*}
    \begin{tikzpicture}[baseline=(current bounding box.center)]
  }
  {
    \end{tikzpicture}
    \end{equation*}
    \endgroup
    \ignorespacesafterend
  }

  \tikzset
  {
    diagram/.style=
    {
      matrix of math nodes,
      column sep={4.3em,between origins},
      row sep={4em,between origins},
      text height=1.5ex,
      text depth=.25ex
    },
    over/.style={preaction={draw=white,-,line width=6pt}},
    every to/.style={font=\footnotesize},
    inj/.style={right hook->},
    surj/.style={-{Latex[open]}},
    cof/.style={>->},
    fib/.style={->>},
  }

  \DeclareFontFamily{U}{mathx}{\hyphenchar\font45}

  \DeclareFontShape{U}{mathx}{m}{n}{
    <5> <6> <7> <8> <9> <10>
    <10.95> <12> <14.4> <17.28> <20.74> <24.88>
    mathx10}{}

  \DeclareSymbolFont{mathx}{U}{mathx}{m}{n}

  \DeclareFontFamily{U}{mathb}{\hyphenchar\font45}

  \DeclareFontShape{U}{mathb}{m}{n}{
    <5> <6> <7> <8> <9> <10>
    <10.95> <12> <14.4> <17.28> <20.74> <24.88>
    mathb10}{}

  \DeclareSymbolFont{mathb}{U}{mathb}{m}{n}

  \DeclareMathAccent{\widebar}{0}{mathx}{"73}

  \DeclareMathSymbol{\Rsh}{\mathrel}{mathb}{"E9}

  \DeclareFontFamily{U}{MnSymbolA}{}

  \DeclareFontShape{U}{MnSymbolA}{m}{n}{
    <-6> MnSymbolA5
    <6-7> MnSymbolA6
    <7-8> MnSymbolA7
    <8-9> MnSymbolA8
    <9-10> MnSymbolA9
    <10-12> MnSymbolA10
    <12-> MnSymbolA12}{}

  \DeclareSymbolFont{MnSyA}{U}{MnSymbolA}{m}{n}

  \DeclareMathSymbol{\twoheaddownarrow}{\mathrel}{MnSyA}{27}

  \newcommand{\MSC}[1]{%
    \let\thempfn\relax
    \footnotetext[0]{2020 Mathematics Subject Classification: #1.}
  }

  \newcommand{\VR}{\textsf{VR}}
  \newcommand{\dgm}{\textsf{dgm}}

  \newcommand{\keys}{\operatorname{keys}}
  \newcommand{\filt}{\operatorname{filt}}

  \newcommand{\bigtriangle}{\scalebox{1.5}{$\triangle$}}

\tikzstyle{vertex}=[circle, draw, minimum size=7pt, inner sep=0pt]

\providecommand{\VR}{}
\renewcommand{\VR}{\mathsf{VR}}
\newcommand{\redzed}{{\textsc{RedZeD}}}
\newcommand{\redzedvr}{{\textsc{RedZeD{\textunderscore}VR}}}
\newcommand{\bbF}[1]{\mathbb{F}_{#1}}

\newcommand{\Ch}{\mathsf{Ch}}
\newcommand{\Vect}{\mathsf{Vect}}

\DeclareFontFamily{U}{dmjhira}{}
\DeclareFontShape{U}{dmjhira}{m}{n}{ <-> dmjhira }{}

\author{Krzysztof Kapulkin \and Nathan Kershaw}

\title{RedZeD: Computing persistent homology by Reduction to Zero Differentials}

\date{\today}

\begin{document}

\maketitle

\begin{abstract}
  We introduce a new algorithm for computing persistent homology of Vietoris--Rips filtrations, which offers a considerable improvement both in terms of time and memory over the existing implementations of the persistence pairing algorithm.
  The key innovation, called active enumeration, is made possible by a new theoretical framework of Reduction to Zero Differentials (hence RedZeD) in which to view persistent homology.
  
  \noindent \textbf{Keywords:} topological data analysis, persistent homology, filtration, chain complex, reduction to zero differentials 
\end{abstract}

%  \setlist[enumerate]{label=(\arabic*)}

% Add content here

\section*{Introduction}

Persistent homology is a fundamental tool of topological data analysis, and, more broadly, of computational topology \cite{zomorodian-carlsson:computing-persistent-homology,edelsbrunner-letscher-zomorodian:persistence,otter-et-al:roadmap}.
It takes as input a filtration (or often a more general diagram) of chain complexes and returns a persistence barcode which tracks topological features of that input filtration.
Examples of filtrations of chain complexes abound in both pure and applied mathematics, arising from (filtrations of) topological spaces, simplicial and cubical complexes, simplicial and cubical sets, graphs, and data.
Consequently, considerable effort has been put into designing algorithms for efficiently computing this invariant \cite{chen-kerber:twist,silva-morozov-vejdemo-johansson:dualities,clear-and-compress,Boissonnat-dey-maria:compressed-annotation,phat-original,otter-et-al:roadmap,bauer:ripser,keeping-it-sparse}.

The Vietoris--Rips filtrations arising from finite metric spaces, or more generally sets equipped with a distance function (i.e., a function $d \colon X \times X \to \mathbb{R}_+$ that is symmetric and satisfies $d(x, x) = 0$), are of particular importance in topological data analysis and topological inference.
Given a finite metric space, its Vietoris--Rips filtration is a filtration of simplicial complexes with vertex set $X$, and with higher simplices at scale $r \geq 0$ given by finite subsets of points that are pairwise at most $r$ apart.
The persistent homology of this filtration can be used to detect topological features of the input that persist through different time scales.

The \emph{persistence pairing} algorithm \cite{zomorodian-carlsson:computing-persistent-homology}, also referred to as the \emph{standard} algorithm, is the basic tool for computing persistent homology of a filtration of chain complexes.
It uses restricted Gaussian elimination, reducing columns left-to-right by adding previously reduced columns to either clear a column or identify a unique pivot.
Since its introduction, many improvements to the algorithm have been proposed.
Broadly speaking, these improvements fall into two categories: reducing the size of chain complexes involved in the filtration at hand, or providing more efficient tools for generating and reducing the matrix.
Unsurprisingly, amongst the broad array of these tools, there is significant variance when it comes to the generality they apply in and how much speedup they offer.

In the specific case of the Vietoris--Rips filtration mentioned above, the software Ripser \cite{bauer:ripser} has emerged as the gold standard for efficient computation of persistent homology.
By combining different improvements such as: clearing birth columns \cite{chen-kerber:twist}, computing cohomology instead of homology \cite{silva-morozov-vejdemo-johansson:dualities}, implicit matrix reductions, and identifying apparent and emerging pairs, Ripser outperforms other im\-ple\-men\-ta\-tions both in terms of computational speed and memory usage.

The purpose of the present work is twofold.
First, we provide an abstract rephrasing and a generalization to the level of chain complexes of some of the existing algorithms.
Second, we introduce a new speedup that is specific to persistent homology of Vietoris--Rips complexes.

The first of these goals is accomplished by placing the persistence pairing algorithm in a more abstract context.
Specifically, we show that it can be seen as replacing a given filtration of chain complexes (over a field) by a naturally quasi-isomorphic sequence of chain complexes with zero differentials.
Let us unpack this statement.
If a chain complex has zero differentials, its homology groups are simply the chain groups themselves.
Thus, given a filtration of chain complexes $C^*$, we might wish to replace it with a new sequence of chain complexes $A^*$ (which might no longer be a filtration of subcomplexes) with zero differentials, along with a family of quasi-isomorphisms $r^* \colon C^* \to A^*$, i.e., maps inducing isomorphisms on homology.
The key insight is that each chain complex $A^i$ in the sequence $A^*$ depends only on the previous chain complex $A^{i-1}$ and the map $r^{i-1} \colon C^{i-1} \to A^{i-1}$.
At its core, this observation is not new and was made in the context of cochain complexes arising from simplicial complexes in \cite{Dey-Fan-Wang}, where the map $r$ mentioned above is called an \emph{annotation}.
The novel contribution of our work up to this point is a more abstract treatment of this insight.

Algorithmically, at each step of the filtration, we can store $r$ using dictionaries $R_0$, $R_1$, \ldots, $R_n$ corresponding to the homological degree.
With this, a straightforward implementation of the proof of the aforementioned fact leads to the first algorithm considered in the present paper, named \redzed, which is an acronym for \textbf{Red}uction to \textbf{Ze}ro \textbf{D}ifferentials.
While seemingly approaching the problem from a different direction, \redzed{} can be rephrased in the language of matrix reduction algorithms, in which case it can be seen to implement several previously studied improvements to the standard algorithm, namely, \emph{compression} \cite{clear-and-compress} along with \emph{exhaustive and retrospective reduction} \cite{keeping-it-sparse, edelsbrunner-olsbock,edelsbrunner-zomorodian,zomorodian-carlsson:computing-persistent-homology}.
In other words, \redzed{} represents both a generalization and a more abstract treatment, along with a new implementation of several existing ideas.

On its own, \redzed{} does not offer significant improvements over existing methods.
This new perspective, however, suggests a new improvement to the algorithm specific to the Vietoris--Rips filtration. 
This improvement is only possible on top of \redzed{}, making it a necessary component.
Typically, when computing persistent homology of Vietoris--Rips complexes up to degree $n$, the majority of time is spent reducing $(n+1)$-birth columns, i.e., columns creating new features in degree $n+1$, but not affecting degree $n$.
Consequently, one wishes to avoid considering such $(n+1)$-simplices altogether, or at the very least ``weed out'' as many of them as possible.
In Ripser, this is done by computing cohomology and birth clearing, neither of which apply to our homological setting.
Instead, we employ a technique which we refer to as \emph{active enumeration}.
We say that a birth simplex $\sigma$ is \emph{active} at time $i$ if $r^i(\sigma) \neq 0$, and it is \emph{inactive} otherwise. 
This cuts across the standard distinction between alive and dead simplices used in persistence, as now a simplex can be simultaneously both dead and active.

After adding a new birth $n$-simplex $\sigma$, active enumeration searches for all $(n+1)$-simplices $\tau$ that have at least one active face and $\sigma$ as their youngest face.
This can be done by an efficient check: loop through all active $n$-simplices, check whether they share an $(n-1)$-face with $\sigma$, then check whether a single vertex can be added to $\sigma$ to form a desired $(n+1)$-simplex.
The key point is that \redzed's use of dictionaries $R_0$, $R_1$, \ldots, $R_n$ makes it immediate to identify active $n$-simplices --- they are exactly the keys of $R_n$.
This constitutes \redzedvr, a specialization of \redzed{} to the Vietoris--Rips filtration.

To ascertain the quality of improvements implemented in \redzedvr, relative to other software, notably Ripser, we run several tests comparing their performance across different data sets.
These include both curated data sets with known characteristics, and common benchmark data sets found in \cite{bauer:ripser}.
We test 22 different cases, and find that \redzedvr{} is faster in 21 of them, the one outlier being $H_3$ of a noisy 3-dimensional sphere. 
In many cases, \redzedvr{} also scales better than Ripser. 
These results are most pronounced for $H_1$, especially on data with more topological signal or high-dimensional data.
Speedups in higher homology degrees are not as pronounced as in degree 1, but \redzedvr{} generally remains faster, aside from the one aforementioned case. 
We also find that \redzedvr{} generally uses less peak memory than Ripser. 
Again, these results are most pronounced in degree 1, where \redzedvr{} always uses less memory. 
\redzedvr{} generally uses less memory in higher degrees as well; however, Ripser tends to outperform \redzedvr{} in higher degrees for data with little topological signal, or for large datasets computed with a low threshold.
Some of the more notable results for $H_1$ can be seen in the table below:

\begin{table}[htbp] \label{table:timemem}
  \centering
  \setlength{\tabcolsep}{4pt}
  \begin{tabular}{llrrrrrrr}
    \toprule
    & & & \multicolumn{3}{c}{Time (s)} & \multicolumn{3}{c}{Peak memory (MiB)} \\
    \cmidrule(lr){4-6} \cmidrule(lr){7-9}
    Dataset & & pts & \redzedvr{} & Ripser & Speedup & \redzedvr{} & Ripser & Ratio \\
    \midrule
    grid spheres, $S^{1}$ & $H_1$ & 5,000 & 0.706 & 171 & 242 & 167 & 452 & 2.71 \\
    noisy sphere, $S^{1}$ & $H_1$ & 3,800 & 3.87 & 292 & 75.4 & 231 & 16,500 & 71.4 \\
    random distance & $H_1$ & 450 & 0.404 & 251 & 621 & 30.9 & 2,070 & 66.9 \\
    random euclidean, $\mathbb{R}^{10}$ & $H_1$ & 6,000 & 1.12 & 86.5 & 76.9 & 258 & 4,400 & 17.1 \\
    \texttt{o3\_1024} & $H_1$ & 1,024 & 0.0599 & 0.898 & 15.0 & 16.4 & 145 & 8.82 \\
    \texttt{dragon} & $H_1$ & 2,000 & 0.114 & 0.969 & 8.51 & 30.0 & 98.4 & 3.28 \\
    \texttt{o3\_4096} & $H_1$ & 4,096 & 1.40 & 53.3 & 38.0 & 177 & 8,360 & 47.2 \\
    \bottomrule
  \end{tabular}
  \caption{Time and memory comparisons between \redzedvr{} and Ripser for various datasets}
\end{table}

\subsection{Organization of the paper}

The paper is organized into six sections.
\cref{sec:prelims} treats preliminaries from persistent homological algebra and topological data analysis.
Readers familiar with these topics can safely skip this section.

In \cref{sec:algo}, we introduce the \redzed{} algorithm by first providing a theoretical result on replacing a filtration by a quasi-isomorphic sequence, and then extracting an algorithm, called ``Naive \redzed{}'', from it.
Naive \redzed{} is a naive implementation of reducing a filtered chain complex to a sequence with zero differentials, which we use for ease of proving correctness and extracting the interpretation in terms of the standard matrix reduction algorithm.
We conclude \cref{sec:algo} by discussing simple improvements that speed up our algorithm, including explicitly storing inverses of the dictionaries $R_p$ and deleting keys when the corresponding values are zero.
We reserve the name \redzed{} to refer to the algorithm with these improvements implemented, distinguishing it from the Naive \redzed{} algorithm.

In \cref{sec:active}, we describe active enumeration and how it can be used to speed up computations of Vietoris--Rips persistent homology.
In particular, we show how active simplices can be understood in the matrix-theoretic language.
This gives \redzedvr, a specialization of \redzed{} to the Vietoris--Rips filtration.
The \redzedvr{} algorithm is implemented in both Julia and C++, and details on the implementation are described in \cref{sec:implementation}.
We then compare \redzedvr{} to Ripser in \cref{sec:experiments}, reporting on relative differences in performance.
We conclude in \cref{sec:conclusion} with a summary of the paper and a discussion of future directions.

The code for \redzedvr{} is available on
\begin{center}
    \url{https://github.com/nkershaw01/RedZeD}
\end{center}
and can also be downloaded as a Python package called \texttt{redzed-tda}:
\begin{center}
    \url{https://pypi.org/project/redzed-tda/}.
\end{center}

\subsection{Related work}
As mentioned before, the results and algorithms of \cref{sec:algo} can be viewed as a new (and more general) interpretation of part of the algorithm presented in \cite{Dey-Fan-Wang}, which concerns persistent homology of arbitrary simplicial maps (not necessarily inclusions).
\cite{Dey-Fan-Wang} works with cohomology and keeps track of the cohomological basis using the \emph{annotation} map, which is analogous to the map $r$ described above.
The behavior of this algorithm on \emph{elementary inclusions} corresponds directly to the steps performed in \redzed.
We comment on this relationship in detail in \cref{rmk:dey-fan-wang}.

In the language of matrix reduction algorithms, the techniques employed by \redzed{} include compression along with exhaustive and retrospective reduction.
These have been studied extensively and we refer the reader to \cite{keeping-it-sparse} for an excellent overview and comparison.

In terms of performance, we compared \redzedvr{} to Ripser \cite{bauer:ripser}, which represents the high water mark in computing persistent homology of Vietoris--Rips filtrations.
In the implementation of \redzedvr, we also borrow several ideas from Ripser, notably its combinatorial numbering system used to encode simplices.

\subsection{Acknowledgements}

As newcomers to the area, we have greatly benefited from conversations with H{\aa}vard Bakke Bjerkevik, Daniel Carranza, Tamal Dey, Barbara Giunti, Mike Lesnick, Dmitriy Morozov, and Luis Scoccola.
We are grateful to all of them for patiently answering our questions and providing excellent suggestions and insights.

\subsection{AI Disclaimer}

After completing the first implementation, we used Claude Opus 5 to optimize the code; in particular, to create a new implementation of dictionaries, with backward shift deletion instead of using tombstones.
We also used Claude to translate our code from Julia to C++.

\section{Preliminaries} \label{sec:prelims}

In this section, we review the necessary background on (persistent) homological algebra.
Throughout, we work over a field $F$, suppressing it from notation, thus, for example, writing $\Ch$ for the category of chain complexes and chain maps.
Similarly, we write $H_n(X)$ to mean the $n$-th homology group of a simplicial complex $X$ with coefficients in $F$, more commonly denoted $H_n(X; F)$.

In the context of persistence, our main object of study will be filtrations of chain complexes.
Although we index our filtrations by natural numbers, we could alternatively choose any finite totally ordered set.
 
\begin{definition} \label{def:filtration}
    A (finite) \emph{filtration} $C^*$ of chain complexes is a sequence of inclusions of chain complexes
    \[C^0 \subseteq C^1 \subseteq \dots \subseteq C^n \text{.}\]
\end{definition}

To make our filtrations well-behaved, we will impose the condition that our filtrations are \emph{elementwise}, i.e., the ``difference'' between $C^i$ and $C^{i+1}$ is a single generator.
To express it in the categorical language, which will be useful to us a little later, we define the algebraic models of the $n$-sphere and the $n$-disk for a non-negative integer $n$.
The \emph{algebraic $n$-sphere} $S(n)$ and the \emph{algebraic $n$-disk} $D(n)$ is the chain complexes given by
    \[
    S(n)_p = 
    \begin{cases}
    F & \text{if } n=p \\
    0 & \text{otherwise}
    \end{cases} 
    \qquad \text{ and } \qquad
    D(n)_p = 
    \begin{cases}
    F & \text{if } n=p-1, p \\
    0 & \text{otherwise.}
    \end{cases}\]
For every $n$, there is a natural inclusion $S(n-1) \hookrightarrow D(n)$.
With that as a preamble, we can now define elementwise filtrations.

\begin{definition} \label{def:elementwise}
    A filtration $C^*$ is \emph{elementwise} if $C^0 = 0$, and for all $1 \leq i \leq n$ $C^i$ can be constructed via a pushout square:
    \begin{center}
    \begin{tikzcd}
    S(n-1) \arrow[d, swap, hook] \arrow[r] & C^{i-1} \arrow[d, hook] \\
    D(n) \arrow[r]                         & C^i               
    \end{tikzcd}
    \end{center}
\end{definition}
Explicitly, in an elementwise filtration, we have for every $i \geq 1$ there is a $p$ such that:
    \[
    (C^i / C^{i-1})_n = 
    \begin{cases}
    F & \text{if } n=p \\
    0 & \text{otherwise.}
    \end{cases}\] 
If we have a choice of basis preserved by the inclusions, we say that the generator $x$ of $F$ in dimension $p$ is the \emph{element added at time} $i$, and $p$ is called the \emph{dimension} of $x$. 
Let $C^*$ be an elementwise filtration, and $x$ a basis element of $C^i_p$ for some $i$ and $p$. 
The \emph{filtration time} of $x$, denoted $\filt(x)$, is the minimal index $j$ such that $x \in C^j_p$. 

As chain complexes are typically studied using homology, we use persistence modules to study filtrations of chain complexes.
Since our filtrations are indexed by finite totally ordered sets, we will likewise consider persistence modules over finite totally ordered sets, obtained by taking the homology of chain complexes in the filtration.

\begin{definition} \label{def:pers_module}
    A \emph{persistence module} over a totally ordered set $T$ is a functor $V \colon T \to \Vect_F$ from $T$, viewed as a category, to the category of $F$-vector spaces. 
\end{definition}

Explicitly, a persistence module $V$ over $T$ consists of a collection of $F$-vector spaces $\{V_i\}_{i \in T}$ together with maps $\varphi_i^j\colon V_i \to V_j$ for all $i \leq j$, such that each $\varphi_i^i = \id$ and for all $i \leq j \leq k$, we have $\varphi_i^k = \varphi^k_j \circ \varphi^j_i$.

An important example of a persistence module is the notion of an interval module:
\begin{example} \label{ex:interval_module}
    Let $T$ be a totally ordered set, and $b \leq d \in T$. 
    The \emph{interval module}, denoted $I[b,d)$, is the persistence module where $I[b,d)_i$ is $F$ for $b \leq i < d$, and $0$ elsewhere.
    For $i \leq j$, $\varphi_i^j = \id$ for $i \in [b,  d)$, and the zero map otherwise.
\end{example}

Given a persistence module over a finite set $V$, the following classical theorem of P.~Gabriel allows us to decompose it into interval modules in a canonical way:

\begin{theorem}[Gabriel, \cite{gabriel:decomposition}] \label{TH:decomposition}
    For a finite totally ordered set $T$ and a persistence module $V$ over $T$, there exists a unique decomposition of $V$ into a direct sum of interval modules:
    \[ V \cong \bigoplus_J I[b_j,d_j)\text{.} \tag*{\qed} \]
\end{theorem}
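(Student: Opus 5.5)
We argue by induction on $|T|$, writing $T = \{1 < 2 < \dots < m\}$. We assume, as is implicit in the statement, that each $V_i$ is finite-dimensional. We also read an interval $I[b,d)$ with $d$ allowed to equal a formal point $m+1 = \infty$, so that classes surviving to the end are included.

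\textbf{Existence.} The strategy is to split off one interval summand at a time. Choose $b$ minimal such that $V_b \neq 0$, and choose a nonzero $v \in V_b$ that survives as long as possible. Concretely, consider the descending chain of kernels $\ker \varphi_b^{j}$ for $j > b$. Pick $v \notin \ker\varphi_b^{d-1}$ with $d$ maximal, so that $\varphi_b^{j}(v) \neq 0$ exactly for $b \le j < d$. The elements $\varphi_b^j(v)$ for $b \le j < d$ span a submodule isomorphic to $I[b,d)$, and we must show it is a direct summand.

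This is the main obstacle. We construct compatible complements $W_j \subseteq V_j$ with $V_j = \langle \varphi_b^j v\rangle \oplus W_j$ and $\varphi_j^{j+1}(W_j) \subseteq W_{j+1}$. We build them inductively backwards from $j = d-1$ down to $b$. At $j = d-1$, take any complement $W_{d-1}$. Given $W_{j+1}$, let $\pi_{j+1} \colon V_{j+1} \to F$ be the projection onto the coordinate of $\varphi_b^{j+1} v$. Set $W_j = \ker(\pi_{j+1} \circ \varphi_j^{j+1})$, which is a hyperplane not containing $\varphi_b^j v$, since that element maps to $\varphi_b^{j+1}v$. For $j \ge d$ and $j < b$, set $W_j = V_j$.

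It remains to check compatibility at the death step $d-1 \to d$, which holds because $\varphi_b^{d} v = 0$. At the birth step $b-1 \to b$, the space $V_{b-1}$ is zero by the minimality of $b$. In general one must still argue that the image of $V_{j-1}$ in $V_j$ avoids the $v$-line; this is guaranteed because $b$ is chosen minimal, so that only elements born at $b$ can have components along $v$.

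Then $V \cong I[b,d) \oplus W$ with $\dim W$ smaller, and we induct on total dimension $\sum_j \dim V_j$.

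\textbf{Uniqueness.} We show that the multiplicity of each $I[b,d)$ is determined by ranks. By functoriality, the rank $r_i^j$ of $\varphi_i^j$ equals the number of summands $I[b,d)$ with $b \le i$ and $j < d$, since the rank of a map between direct sums is additive over the summands. An inclusion--exclusion formula then recovers the multiplicity:
\[
 \#I[b,d) = r_b^{d-1} - r_{b-1}^{d-1} - r_b^{d} + r_{b-1}^{d},
\]
with the convention that ranks involving out-of-range indices vanish. This expression is invariant under isomorphism, which gives uniqueness.
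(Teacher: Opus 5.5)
Your proof is correct, but it takes a genuinely different route from the paper, which gives no argument and simply cites Gabriel. There the result is the linearly oriented $A_m$ case of the classification of indecomposable representations of Dynkin quivers (via the Tits form or Bernstein--Gelfand--Ponomarev reflection functors). Uniqueness then comes from Krull--Remak--Schmidt, since each $I[b,d)$ has endomorphism ring $F$.

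You instead argue by hand. Your backward construction of the complements $W_j$ amounts to showing that $I[b,d)$ is injective among modules vanishing below $b$. So the submodule generated by any nonzero $v\in V_b$, with $b$ minimal, splits off, and induction on total dimension gives existence. For uniqueness you replace Krull--Remak--Schmidt by M\"obius inversion of the rank invariant, which also yields an explicit formula for the multiplicities.

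The citation buys generality. It covers other Dynkin types, and, by Auslander and Ringel--Tachikawa, even modules that are not pointwise finite-dimensional, which the paper's definition does not exclude. Your route is elementary and constructive but needs $\dim V_i<\infty$. That is harmless here, since every module in the paper is the homology of a finite elementwise filtration.

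Your reading of $d$ as possibly $m+1=\infty$ also repairs the paper's formulation. With $b\le d\in T$, as in its definition of $I[b,d)$, no summand can be nonzero at $\max T$.

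A few local slips should be cleaned up:
\begin{itemize}
\item The kernels $\ker\varphi_b^j$ form an ascending, not descending, chain in $j$.
\item Maximality of $d$ is never used. The step $d-1\to d$ only needs $\varphi_b^d v=0$, which is the definition of $d$, and the step $b-1\to b$ only needs $V_{b-1}=0$.
\item The sentence asserting that the image of $V_{j-1}$ in $V_j$ avoids the $v$-line is false for $b<j<d$, since that image contains $\varphi_b^j v$. It is also not what is required: the needed inclusion $\varphi_j^{j+1}(W_j)\subseteq W_{j+1}$ holds by the very definition of $W_j$, with no appeal to minimality of $b$. That sentence should simply be deleted.
\item You announce induction on $|T|$ but actually induct on $\sum_j\dim V_j$; say the latter.
\end{itemize}
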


\begin{definition}
    Let $C^*$ be a filtration of chain complexes, and $n$ a natural number. 
   The \emph{persistence diagram} of $C^*$ in dimension $n$ is the multiset of pairs $ \dgm_n(C^*)=\{[b_j,d_j)\}_{j \in J}$ associated to the unique decomposition $H_n(C^*) \cong \bigoplus_J I[b_j,d_j)$.
   A \emph{persistence pair} is an element of $\dgm_n(C^*)$. 
\end{definition}

Note that $\dgm_n(C^*)$ is a multiset, rather than just a set, because an interval $I[b,d)$ may appear multiple times in the decomposition. 
The computation of the persistence diagram of a filtration of chain complexes is often referred to as \emph{persistent homology}. 
The standard method to compute the persistence diagram is the \emph{persistence pairing algorithm}, \cite{zomorodian-carlsson:computing-persistent-homology,silva-morozov-vejdemo-johansson:dualities,bauer:ripser}, which is a matrix reduction algorithm. 

The input to the persistence pairing algorithm is the \textit{filtration boundary matrix} $M$. 
The matrix $M$ is an $n \times n$ matrix, with rows and columns representing elements in the filtration, ordered by the order in which they appear.
Let $x$ and $y$ denote elements in the filtration. 
We write $M_y$ to mean the column corresponding to the element $y$, and $M_{xy}$ to be the entry in the row corresponding to $x$ and the column corresponding to $y$.
If $x \in \partial y$, the entry $M_{xy}$ corresponds to the coefficient of $x$ in $\partial y$, and otherwise $M_{xy}=0$.
This way, the column $M_y$ represents $\partial y$.

The matrix $M$ is reduced column by column, from left to right. 
A column $M_y$ is reduced only by adding scalar multiples of columns to the left of it, and it is reduced until it is either zero or has a pivot unique from all previous columns. 
Once the entire matrix is reduced, the persistence pairs can be read off from the reduced matrix. 
If an entry $M_{xy}$ is a pivot of column $y$, this corresponds to a persistence pair $(x,y)$. 
A zero column $M_y$ such that $y$ does not appear as the pivot row of any other column corresponds to a pair $(y, \infty)$. 

Let $C^*$ be an elementwise filtration and suppose that $C^i$ was obtained from $C^{i-1}$ by adding a generator $x$ of dimension $p$.
This implies that either: the dimension of the image of $\partial_{p}$ increased by $1$, or the dimension of the kernel of $\partial_p$ increased by $1$. 
In the former case, $\dim H_{p-1} (C^i) = \dim H_{p-1} (C^{i-1}) - 1$ and we call $x$ a \emph{death element}. 
In the latter case, $\dim H_{p-1} (C^i) = \dim H_{p-1} (C^{i-1}) + 1$ and we call $x$ a \emph{birth element}. 
Equivalently, $x$ is a birth element if the column corresponding to $x$ in the reduced matrix is zero, and a death element otherwise.

Persistent homology is most commonly used to compute Vietoris--Rips homology of finite metric spaces.
The Rips complex construction associates a filtration of simplicial complexes to a finite metric space.
However, the construction is more general than that; namely, it works for an arbitrary set $X$ equipped with a symmetric distance function $d$, i.e., a function  $d \colon X \times X \to [0, \infty)$ such that for all $x, y \in X$, we have $d(x, x) = 0$ and $d(x, y) = d(y, x)$.
Compared to the definition of a metric, we are therefore dropping the triangle inequality.
Such generality is useful in a variety of applications, including those considered in \cite{kapulkin-kershaw:data-analysis}.

\begin{definition}
    Let $(X,d)$ be a finite set equipped with a distance function, and $r \in [0, \infty)$.
    The \emph{Vietoris--Rips complex} at scale $r$ is the simplicial complex $\VR_r(X)$ with:
    \begin{itemize}
        \item $\VR_r(X)_0 = X$
        \item $\VR_r(X)_n = \{\{x_0,\dots, x_n\} \ | \ d(x_i,x_j) \leq r \text{ for all } 0 \leq i,j \leq n \} $
    \end{itemize}
Letting $r$ vary, we can form a filtration of simplicial complexes called the \emph{Vietoris--Rips filtration}, denoted $\VR(X)$. 
\end{definition}
The complex $\VR_r(X)$ can also be viewed as the clique complex on the graph $G_r(X)$, which has as vertices $X$ and an edge $x \sim y$ if $d(x,y) \leq r$. 
With this viewpoint, we can also generalize the Vietoris--Rips filtration to any filtration of graphs. 

Let $C$ be the functor sending a simplicial complex $K$ to the chain complex $C(K)$, which in dimension $n$ is the $F$-vector space generated by the set of $n$-simplices. 
Applying this to each element of the filtration $\VR(X)$, we get a filtration of chain complexes $C(\VR(X))$.

\begin{definition}
    The \emph{Vietoris--Rips persistence diagram} of a set  $(X, d)$ equipped with a symmetric distance function is the persistence diagram $\dgm_{\VR}(X) = \dgm(C(\VR(X))$. 
\end{definition}
\newcommand{\refined}{\mathsf{ref}}
The Vietoris--Rips filtration is not necessarily an elementwise one, but it can be refined to one.
To do this, we impose (arbitrarily) a total order on the elements of $X$. 
Then, we can order the simplices on the complete simplicial complex on $X$ lexicographically according to
\begin{enumerate}
    \item the diameter (i.e., maximum length of an edge in the simplex) with $\sigma < \tau$ if $\operatorname{diam}(\sigma) < \operatorname{diam}(\tau)$;
    \item the dimension with $\sigma < \tau$ if $\operatorname{diam}(\sigma) = \operatorname{diam}(\tau)$, and $\dim(\sigma)<\dim(\tau)$;
    \item ordering on the vertices with $\sigma \leq \tau$ if their diameters and dimensions are equal, and the list of vertices of $\sigma$ are lexicographically less than or equal to the list of vertices of $\tau$.
\end{enumerate}
We can then form a filtration indexed by the natural numbers, where at time $i$ we add the $i$-th simplex according to the order. 
We call this filtration the \emph{lexicographically refined Vietoris--Rips filtration} and denote it $\VR_\refined(X)$, thus making the filtration $C(\VR_\refined(X))$ elementwise as well.
In practice, one thus computes $\dgm(C(\VR_{\refined}(X))$ and uses it to reconstruct $\dgm_{\VR}(X)$. 
This is done by replacing each $(i,j)$ with $(\operatorname{diam}(\sigma_i),\operatorname{diam}(\sigma_j))$, where $\sigma_i$ and $\sigma_j$ are the simplices added at times $i$ and $j$, and then removing all intervals of length 0. 
Because of this, we will always assume that all the Vietoris--Rips filtrations are already refined, and will omit the subscript $(-)_\refined$ going forward.

Similar to the terminology for chain complexes, in the refined filtration, we call a simplex $\sigma$ a \emph{birth simplex} if its corresponding element in the filtration is a birth element.
We call a $\sigma$ a \emph{death simplex} if its corresponding element is a death element. 
\section{The \redzed{} algorithm} \label{sec:algo}

In this section, we introduce our general purpose algorithm \redzed{} for computing persistent homology of an arbitrary elementwise filtration of chain complexes over $\bbF 2$.
The name \redzed{} is a syllabic abbreviation and stands for \textbf{Red}uction to \textbf{Ze}ro \textbf{D}ifferentials.
Although stated for $\bbF 2$, the algorithm can be easily modified to work over an arbitrary field, which we comment on in \cref{rmk:works_over_field}.
Note that while we are referring to the algorithm presented in this section as \redzed, in \cref{sec:active} we introduce a further improvement, called \emph{active enumeration}, which specializes \redzed{} to Vietoris--Rips filtration and yields a new algorithm \redzedvr.
On its own, the \redzed{} algorithm does not offer an improvement over existing methods, and is in general significantly slower. 
Without \redzed{}, however, the active enumeration speedup is not possible, making \redzed{} a necessary component. 

We do not claim any originality in developing the \redzed{} algorithm; as indicated above, it is merely a rephrasing of part of the algorithm in \cite{Dey-Fan-Wang}, which we comment on in more detail in \cref{rmk:dey-fan-wang}. 
Similarly, while \redzed{} might not at first appear to be performing matrix reductions of the persistence pairing algorithm, it can be translated into the matrix-theoretic language.
Upon such translation, discussed later in the section, it recovers several of the well-known improvements to the standard algorithm, namely, compression \cite{clear-and-compress} along with exhaustive \cite{edelsbrunner-zomorodian,zomorodian-carlsson:computing-persistent-homology,edelsbrunner-olsbock} and retrospective reduction \cite{keeping-it-sparse}.

In the first part of this section, we introduce Naive \redzed, the version of the algorithm that directly follows from \cref{prop:reduction}, and prove its correctness. 
In the second part of the section, we translate Naive \redzed{} to the language of matrices and discuss the comparisons with existing improvements to the standard algorithm. 
In the final part of this section, we discuss two minor improvements to Naive \redzed. When these improvements are implemented, we call the algorithm \redzed{}.

% While we can translate the algorithm into another form of the persistence pairing algorithm, which we comment on in \cref{rmk:redzed_is_matrix}, we first present it in a way independent of the persistence pairing algorithm and don't explicitly work with matrices.

Recall that a chain map $f \colon C \to D$ between chain complexes is a \emph{quasi-isomorphism} if it induces isomorphisms $H_n(C) \to H_n(D)$ for every non-negative integer $n$.
The key theorem underlying our algorithm says that an elementwise filtration of chain complexes can be replaced by quasi-isomorphic one in which all chain complexes have zero differentials.
More precisely, it is the procedure given in the proof of this statement that underpins our algorithm.

\begin{theorem} \label{prop:reduction}
    Let $F$ be a field and $C^* \colon 0 = C^0 \hookrightarrow C^1 \hookrightarrow \dots \hookrightarrow C^m$ be an elementwise filtration of chain complexes over $F$. 
    Then there exists a diagram 
    \begin{center}
    \begin{tikzcd}
    C^0 \arrow[r] \arrow[d, "r^0"' , "\simeq"] 
      & C^1 \arrow[r] \arrow[d, "r^1"', "\simeq"] 
      & \cdots \arrow[r] 
      & C^m \arrow[d, "r^n"', "\simeq"] \\
         A^0 \arrow[r] 
      & A^1 \arrow[r] 
      & \cdots \arrow[r] 
      & A^n
    \end{tikzcd}
    \end{center}
    where each $r^i$ is a quasi-isomorphism, and the maps $\partial^i_j:A^i_j \to A^i_{j-1}$ are the zero map for all $i$ and $j$. 
\end{theorem}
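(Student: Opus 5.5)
We construct $A^i$ and $r^i$ by induction on $i$, together with horizontal maps $A^{i-1} \to A^i$ that make each square commute. At each step $A^i$ is a chain complex with zero differentials, so $A^i \cong H(A^i)$, and the requirement that $r^i$ be a quasi-isomorphism says precisely that $r^i$ induces $H(C^i) \cong A^i$. For the base case, $C^0 = 0$, and we take $A^0 = 0$ and $r^0 = 0$.

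For the inductive step, suppose $r^{i-1}\colon C^{i-1} \to A^{i-1}$ has been constructed. By \cref{def:elementwise}, $C^i$ is the pushout of $C^{i-1} \leftarrow S(p-1) \hookrightarrow D(p)$, so $C^i = C^{i-1} \oplus F x$ with $\partial x = y \in C^{i-1}_{p-1}$ a cycle. A chain map out of $C^i$ is therefore the same as a chain map $g$ out of $C^{i-1}$ together with a choice of image for $x$ whose boundary is $g(y)$. We apply $r^{i-1}$ to $y$ and split into two cases.

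\emph{Case 1 (birth): $r^{i-1}(y) = 0$.} Set $A^i = A^{i-1} \oplus S(p)$, with the horizontal map given by the inclusion. Define $r^i$ to be $r^{i-1}$ on $C^{i-1}$ and to send $x$ to the new generator $e$ of $S(p)$. This is a chain map because $\partial e = 0 = r^{i-1}(\partial x)$, and the square commutes by construction. Since $r^{i-1}$ is a quasi-isomorphism, $y$ is a boundary in $C^{i-1}$. Hence $H(C^i) \cong H(C^{i-1}) \oplus F[x - w]$, where $\partial w = y$. Under $r^i$ the class $[x - w]$ maps to $e - r^{i-1}(w)$, and these images together with $A^{i-1}$ span $A^i$. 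So $r^i$ induces an isomorphism.

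\emph{Case 2 (death): $r^{i-1}(y) = a \neq 0$.} Here $y$ is a cycle whose class is nonzero, since $r^{i-1}$ is injective on homology. Choose a linear functional $\lambda$ on $A^{i-1}_{p-1}$ with $\lambda(a) = 1$; concretely, take a basis element appearing in $a$ with nonzero coefficient. Set $A^i = A^{i-1}/\langle a \rangle$, which again has zero differentials, and let the horizontal map be the quotient $q$. Define $r^i = q \circ r^{i-1}$ on $C^{i-1}$ and $r^i(x) = 0$. This is a chain map because $q(r^{i-1}(\partial x)) = q(a) = 0$, and the square commutes. For the quasi-isomorphism: $H_{p-1}(C^i) = H_{p-1}(C^{i-1})/\langle [y] \rangle$, and the other homology groups are unchanged. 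Adding $x$ creates no new $p$-cycles, because $\partial x = y$ is not a boundary. Under $H(C^{i-1}) \cong A^{i-1}$ the class $[y]$ corresponds to $a$, so the induced map is the isomorphism between the quotients $H_{p-1}(C^{i-1})/\langle [y]\rangle \cong A^{i-1}_{p-1}/\langle a \rangle$. Choosing $\lambda$ is not needed for existence; it only identifies a concrete basis of the quotient, namely delete the pivot basis element and rewrite, which is what the algorithm will do.

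The main obstacle is the homology bookkeeping in the two cases. One must check that the new homology of $C^i$ is exactly the old homology plus or minus one class, and that $r^i$ tracks this change. I would handle it with the long exact sequence of the pair $C^{i-1} \subseteq C^i$, whose quotient is $S(p)$; the connecting map $H_p(S(p)) \to H_{p-1}(C^{i-1})$ sends the generator to $[y]$. Compare this sequence with the analogous short exact sequence for $A^{i-1} \to A^i$, which is split in Case 1 and a quotient in Case 2, and conclude by the five lemma.
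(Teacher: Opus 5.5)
Your argument is correct, and the objects you build are the paper's own. In Case~1 the paper takes the pushout of $r^{i-1}$ along $C^{i-1}\hookrightarrow C^i$, which is $A^{i-1}\oplus S(p)$ with $x\mapsto e$. In Case~2 it takes the pushout of $A^{i-1}\leftarrow S(p-1)\to D(p)\to 0$, which is $A^{i-1}/\langle a\rangle$ with $x\mapsto 0$.

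The difference is in how the quasi-isomorphism is verified. The paper never computes homology. It uses left properness of the injective model structure: pushouts of quasi-isomorphisms along monomorphisms are quasi-isomorphisms. In Case~1 this applies directly. In Case~2 it first factors $r^i$ as $C^i\to P\to A^i$. Here $P=A^{i-1}\oplus Fx$ with $\partial x=a$ is the pushout of $r^{i-1}$. Then $P\to A^i$ is the pushout of the quasi-isomorphism $D(p)\to 0$ along $D(p)\to P$, which is a monomorphism precisely because $a\neq 0$ over a field.

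You instead compute cycles and boundaries by hand. In Case~1 you exhibit the new class $[x-w]$. In Case~2 you show that no new $p$-cycles appear, which requires dividing by the coefficient of $x$ (this is where you use the field), and that $H_{p-1}(C^i)=H_{p-1}(C^{i-1})/\langle [y]\rangle$. The paper's route is shorter and isolates the role of the field in a single monomorphism condition. Yours is self-contained, needs no model-category input, and makes explicit the birth/death change in homology that the algorithm tracks.

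Two small points.
\begin{itemize}
\item In Case~1, showing that the images span $A^i$ gives only surjectivity. Injectivity follows from the $e$-coordinate of $e-r^{i-1}(w)$, or from a dimension count.
\item Your closing five-lemma sketch does not apply as stated in Case~2. There $A^{i-1}\to A^i$ is not injective, so there is no short exact sequence $0\to A^{i-1}\to A^i\to S(p)\to 0$ to compare with. You would have to compare with $0\to A^{i-1}\to P\to S(p)\to 0$, and then observe that $P\to A^i$ is surjective with acyclic kernel $D(p)$. That is exactly the paper's factorization.
\end{itemize}
Neither point affects your main argument, since your direct computation already covers both cases.
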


The proof of the theorem depends on a stability property of quasi-isomorphisms in the category of chain complexes.
Recall that the category of chain complexes carries the (injective) model structure whose cofibrations are the monomorphisms and weak equivalences are the quasi-isomorphisms \cite[Thm.~2.3.13]{hovey}.
Since all objects are injectively cofibrant, this model structure is left proper \cite[Prop.~13.1.2]{hirschhorn}, i.e., pushouts of quasi-isomorphisms along monomorphisms are again quasi-isomorphisms.

\begin{lemma} \label{prop:left_properness}
    Let $f\colon C \to D$ be a quasi-isomorphism and $C \hookrightarrow C'$ a levelwise injective morphism.
    Let $f'$ be the pushout of $f$ along this inclusion:
    \begin{center}
    \begin{tikzcd}
    C \arrow[d, "f", swap] \arrow[r, hook] & C' \arrow[d, "f'"] \\
    D \arrow[r]                      & D'                
    \end{tikzcd}
    \end{center}
    Then $f'$ is again a quasi-isomorphism. \qed
\end{lemma}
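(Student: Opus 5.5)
The plan is to compare two short exact sequences of chain complexes with the same quotient, and then apply the five lemma to the induced long exact sequences in homology. Write $i\colon C \hookrightarrow C'$ for the given inclusion and $j\colon D \to D'$ for the pushout map. Colimits in $\Ch$ are computed levelwise, so in each degree $n$ we have
\[ D'_n = (D_n \oplus C'_n)/\{(f(c), -i(c)) : c \in C_n\}, \]
with differential induced by those of $D$ and $C'$. The maps are $j(d) = [(d,0)]$ and $f'(c') = [(0,c')]$.

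The first step is to show that $j$ is levelwise injective. Suppose $(d,0) = (f(c), -i(c))$ for some $c \in C_n$. Then $i(c) = 0$, so $c = 0$ because $i$ is injective, and hence $d = f(c) = 0$.

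The second step is to identify the cokernels. Let $Q = C'/C$. The map $f'$ induces a chain map $\bar f'\colon Q \to D'/D$, and I claim it is an isomorphism in each degree.
\begin{itemize}
\item Surjectivity: modulo $j(D)$, every class $[(d,c')]$ equals $[(0,c')]$, which is $f'(c')$.
\item Injectivity: suppose $f'(c') = [(0,c')] \in j(D)$. Then $(-d, c') = (f(c), -i(c))$ for some $d \in D_n$ and $c \in C_n$. This gives $c' = -i(c) \in i(C)$, so $c'$ is zero in $Q$.
\end{itemize}

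This yields a commutative diagram of short exact sequences of chain complexes
\[ 0 \to C \xrightarrow{i} C' \to Q \to 0, \qquad 0 \to D \xrightarrow{j} D' \to D'/D \to 0. \]
The vertical maps are $f$, $f'$ and the isomorphism $\bar f'$. Commutativity of the left square is exactly the pushout square, and the right square commutes by the definition of $\bar f'$.

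The third step uses the naturality of the long exact sequence in homology, including the connecting homomorphisms, to obtain a ladder of long exact sequences. Consider the segment
\[ H_n(C) \to H_n(C') \to H_n(Q) \to H_{n-1}(C) \to H_{n-1}(C') \]
and the corresponding segment for $D$. The vertical maps at $H_n(C)$ and $H_{n-1}(C)$ are induced by $f$, so they are isomorphisms since $f$ is a quasi-isomorphism. The vertical map at $H_n(Q)$ is induced by the isomorphism $\bar f'$, so it is an isomorphism as well. The five lemma then shows that $H_n(f')$ is an isomorphism for every $n$, so $f'$ is a quasi-isomorphism.

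The only step with real content is the second one: checking that the pushout of a monomorphism is again a monomorphism, and that the induced map on cokernels is an isomorphism. The rest is standard homological algebra. Nothing in the argument uses that $F$ is a field, so it works in any abelian category of chain complexes.
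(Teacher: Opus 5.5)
Your overall strategy is correct, but the five-lemma step uses the wrong window. In the segment $H_n(C)\to H_n(C')\to H_n(Q)\to H_{n-1}(C)\to H_{n-1}(C')$, the map you want, $H_n(f')$, sits in the second slot rather than the middle. The fifth vertical map, $H_{n-1}(f')$, is also one of the unknowns, so the five lemma does not apply as stated. From this window the four lemma only gives surjectivity of $H_n(f')$; injectivity needs the connecting map out of $H_{n+1}(Q)$.

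The fix is to center the window on $H_n(C')$:
\[ H_{n+1}(Q)\to H_n(C)\to H_n(C')\to H_n(Q)\to H_{n-1}(C). \]
Here the four outer vertical maps are $H_{n+1}(\bar f')$, $H_n(f)$, $H_n(\bar f')$ and $H_{n-1}(f)$, all isomorphisms, so the five lemma shows $H_n(f')$ is an isomorphism. With that correction your proof is complete. Your first two steps, injectivity of the pushed-out map $j$ and bijectivity of $\bar f'$ on cokernels, are correct as written.

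Your route also differs from the paper's. The paper does not argue by hand. It invokes the injective model structure on $\Ch$, whose cofibrations are the monomorphisms and whose weak equivalences are the quasi-isomorphisms. It notes that every object is cofibrant and deduces left properness from a general result in Hirschhorn, which is exactly this lemma. That is a one-line citation, but it rests on the existence of the model structure and a general properness theorem.

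Your argument is the elementary, self-contained version of the same fact. It uses only two ingredients: in an abelian category, a pushout of a monomorphism is a monomorphism with the same cokernel; and the long exact sequence in homology is natural. It makes the mechanism transparent and, as you note, clearly works over any ring, not just over a field.
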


With the lemma in hand, we can now give the proof of \cref{prop:reduction}.
 
\begin{proof}[Proof of \cref{prop:reduction}]
    Since each $C^i$ is quasi-isomorphic to $A^i$, we must have that $A^i \cong H_*(C^i)$, viewed as a chain complex with zero differential. 
    The important step is constructing the maps $r^i$. 

    We proceed by induction. 
    We have that $C_0 = A_0=0$, and $r^0 = \id[0]$. 
    Now suppose we have $r^{i-1} \colon C^{i-1} \to A^{i-1}$. 
    Since $C^*$ is elementwise, $C^i$ is obtained via a pushout
    \begin{center}
    \begin{tikzcd}
    S(p-1) \arrow[d, hook] \arrow[r] 
      & C^{i-1} \arrow[d, hook] \\
    D(p) \arrow[r]
      & C^i
    \end{tikzcd}
    \end{center}
    Let $x$ be the image of a generator under the map $D(p)_p \to C^i_p$. 
    We do case analysis on the value of $r^{i-1}_{p-1} \partial x$.

    \textbf{Case 1:} $r^{i-1}_{p-1} \partial x = 0$. 
    In this case, define $r^i \colon C^i \to A^i$ by the pushout: 
    \begin{center}
    \begin{tikzcd}
    C^{i-1} \arrow[d, hook] \arrow[r,"r^{i-1}"] 
      & A^{i-1} \arrow[d] \\
    C^i \arrow[r,"r^i"]
      & A^i
    \end{tikzcd}
    \end{center}
    By \cref{prop:left_properness} we see that since $r^{i-1}$ is a quasi-isomorphism, $C^{i-1} \to C^i$ is levelwise injective, $r^i$ must also be a quasi-isomorphism. 
    We need to verify that $A^i$ has zero differentials. 
    The only thing to check is that $\partial (r^i_px) = 0$, but this follows by unwinding the definition:
    \[ \partial (r^i_px) = r^i_{p-1} \partial x = r^{i-1}_{p-1} \partial x  = 0\]
    with the last equality given by assumption. 
    In this case, what is happening is that adding in $x$ did not change the rank of $\partial_p$, i.e., $\operatorname{rank}(\partial_p^{i-1}) = \operatorname{rank}(\partial_p^{i})$. 
    Thus we must have that $\dim (\ker(\partial^i_p)) = \dim (\ker(\partial^{i-1}_p)) + 1$. 
    We are therefore increasing the dimension of the $p$-th homology group by 1, so to form $A^i$ from $A^{i-1}$ we must add an element in dimension $p$. 
    The map $r^i$ is then identical to $r^{i-1}$ but sends $x$ to this new element in $A^i_p$. 

    \textbf{Case 2:} $r^{i-1}_{p-1} \partial x \neq 0$. 
    We define $A^i$
    \begin{center}
    \begin{tikzcd}
    S(p-1) \arrow[d, hook] \arrow[r] & C^{i-1} \arrow[d, hook] \arrow[r, "r^{i-1}"] & A^{i-1} \arrow[dd] \\
    D(p) \arrow[r] \arrow[d]           & C^i \arrow[rd, "r^i", dotted]                &                    \\
    0 \arrow[rr]                      &                                              & A^i               
    \end{tikzcd}
    \end{center}
    with $r^i$ given by the induced map.
    To see that $r^i$ is a quasi-isomorphism, we take the pushout in:
    \begin{center}
    \begin{tikzcd}
    C^{i-1} \arrow[r, "r^{i-1}"] \arrow[d] & A^{i-1} \arrow[d] \\
    C^i \arrow[r]       & P             
    \end{tikzcd}
    \end{center} 
    thus writing $r^i$ as a composite of $C^i \to P \to A^i$.
    Since $C^{i-1} \to C^i$ is a monomorphism, \cref{prop:left_properness} tells us that the first factor is a quasi-isomorphism as a pushout of $r^{i-1}$ along a monomorphism.
    Moreover, the pushout pasting lemma implies that
        \begin{center}
    \begin{tikzcd}
    D(p) \arrow[r] \arrow[d] & C^i \arrow[r] & P \ar[d] \\
    0 \arrow[rr]   &    & A^i             
    \end{tikzcd}
    \end{center} 
    is a pushout as well.
    The top map is a monomorphism by assumption, and $D(p) \to 0$ is a quasi-isomorphism, so by \cref{prop:left_properness}, we obtain that $P \to A^i$ is a quasi-isomorphism.
\end{proof}

\begin{remark}
    The requirement that we are over a field is necessary for \cref{prop:reduction}. 
    For example, consider the chain complex
    \[ \cdots \to 0 \to \mathbb{Z} \langle y_1,y_2 \rangle \to \mathbb{Z}\langle x \rangle \]
    where $y_1 \mapsto 2x$ and $y_2 \mapsto x$.
    Suppose we have a filtration on the generators $x \to y_1 \to y_2$. 

    For the theorem to hold, we in particular need a commutative square: 
    \begin{center}
        \begin{tikzcd}
        C^2 \arrow[r] \arrow[d, "r^2", swap] & C^3 \arrow[d, "r^3"] \\
        A^2 \arrow[r]                  & A^3                 
        \end{tikzcd}
    \end{center}
    where each $A^i$ has zero differentials and each $r^i$ is a quasi-isomorphism.
    We have that $A^2$ must be 
    \[\cdots \to 0 \to 0 \to \mathbb{Z}/2\]
    and that $A^3$ must be
    \[\cdots \to 0 \to \mathbb{Z} \to 0\]

    Thus, for such a commutative square to exist, since $r^2(y_1)=0$, we must have that $r^3(y_1)=0$. 
    But we need $r^3$ to be a quasi-isomorphism, so it in particular must map the generator of $H_1(C^3)$ to the generator of $H_1(A^3)$. 
    But the generator of $H_1(C^3)$ is $y_1-2y_2$, and we get that $r^3(y_1-2y_2)=-2r^3(y_2)$.
    Regardless of what $r^3(y_2)$ is, we know that $2r^3(y_2)$ cannot be a generator of $H_1(A^3)$.
\end{remark}

While \cref{prop:reduction} works over an arbitrary field $F$, from this point on we set $F= \bbF 2$.
We begin by fixing notation specific to $\bbF 2$. 

\begin{notation}
For vector spaces, we assume we have a choice of basis.
A vector $x$ is expressed as a set $\{x_1,\dots, x_k\}$, which is the set of basis elements whose coefficient is nonzero when expressing $x$ as a linear combination.
For basis elements, we occasionally write $x_i$ to mean $\{x_i\}$.
Consequently, vector addition $x+y$ is then performed via symmetric difference $\{x_1, \dots,x_k\} \triangle \{y_1,\dots, y_\ell\}$, and instead of $\sum x_i$ we write $\bigtriangle \{x_{i_1},\dots x_{i_k}\}$. 
For ease of future reference, we summarize this translation in the following table:
\begin{center}
   \[ 
    \setlength{\arraycolsep}{25pt}
    \renewcommand{\arraystretch}{1.5}
   \begin{array}{ll}
       
         F &  \bbF 2\\
        \hline
         x \in F^n & \{x_1,\dots, x_k\} \\ 
         0 & \varnothing \\

         x+y & \{x_1,\dots x_k\} \triangle \{y_1 \dots, y_\ell\} \\
 
         \sum_{i \in I} x_i & \bigtriangle_{i \in I} \{x_{i_1}, \dots, x_{i_k}\}
    \end{array}\]
\end{center}
\end{notation}

We now describe how to turn the proof of \cref{prop:reduction} into an algorithm for computing persistent homology of the filtration.
The key insight is that $A^i$ depends only on $A^{i-1}$, $r^{i-1}$, and the element $x$ (along with its boundary $\partial x$) that gets added at time $i$. 
We then compute $r^{i-1} \partial x $, and the different cases tell us how to construct $A^i$ as well as $r^i$. 
After each $i$, we thus only ever need to store $r^i$ and $A^i$. 
We can store $r^i_p$ using a dictionary $R_p$, which has as keys all basis elements $x$ of $C^i_p$ that are births, and as values $R_p[x] = r^i_p(x)$. 
We can also recover the basis of $A^i_p$ from this dictionary, by finding all keys $x$ such that $R_p[x] = \{x\}$. 
At every stage in the algorithm, we update $R_p$ to reflect $r^i_p$.

To compute persistent homology up to dimension $n$, the algorithm starts by initializing empty dictionaries $R_p$ for every $0 \leq p \leq n$. 
For each element $x$ that gets added, we need to first compute $r^{i-1}_{p-1} \partial x$. 
To do this, we initialize a set $r \partial x=\varnothing$ and then for every $y \in \partial x$, we check if $y \in \operatorname{keys}(R_{p-1})$.
If not, this means that $r^{i-1}_{p-1}y = 0$, so we skip it. 
Otherwise, we update $r\partial x$ by adding $R_{p-1}[y]$ over $\mathbb{F}_2$, i.e., taking the symmetric difference $r \partial x \triangle R_{p-1}[y]$.

Once we have computed $r \partial x$, we break into the two cases as in \cref{prop:reduction}. 
Case 1 is if $r \partial x=\varnothing$. 
In this case $x$ is a birth element, and gets added as a basis element into $A_p^i$ with $r_p^ix=x$. 
To signify this, we set $R_p[x] = \{x\}$.

Case 2 is if $r \partial x = \{y_1, \dots , y_k\}$. 
In this case $x$ is a death element which kills the linear combination  $\{ y_1, \dots , y_k\}$, consisting of at least one element, where we assume that $\filt(y_k) > \filt(y_j)$ for all $1 \leq j \leq k-1$.
To get the correct decomposition, we store $(\filt(y_k),\filt(x))$ as a persistence pair of dimension $p-1$.
The vector space $A^i_{p-1}$ is then given by quotienting $\{ y_1, \dots , y_k\}$ to zero, and the new basis is given by deleting $y_k$. 
Thus for any $z$, if $r^{i-1}_{p-1}z = B$, where $y_k \in B$, then $r^{i}_{p-1}z = B \triangle \{y_1, \dots, y_k\}$. 
This replaces (mod 2) $y_k \in B$ with $\{y_1,\dots,y_{k-1}\}$.
To update $R_{p-1}$, we loop though all $z \in \keys(R_{p-1})$ such that $y_k \in R_p[z]$, and set $R_{p-1}[z]=R_{p-1}[z]\triangle \{y_1,\dots, y_k\}$

As stated, the \redzed{} algorithm works for an arbitrary element-wise filtration of chain complexes in $\Ch(\bbF 2)$. 
It assumes the filtration comes with a function \texttt{filtration\_enumerator} which, when called, returns the index of the element in the filtration, the dimension of the element, and a list of the elements in its boundary. 
Pseudocode for the algorithm is given below.
We call this version Naive \redzed, since a few further improvements can be made, which we describe at the end of this section.

\begin{algorithm}[H] 
\caption{Naive \redzed}
\begin{algorithmic}[1]
\Statex \textbf{Input 1:} An elementwise filtration given in the form of \texttt{filtration\_enumerator}
\Statex \textbf{Input 2:} A natural number $n$
\Statex \textbf{Output:} The persistence pairs of the filtration up to dimension $n$
\Statex \hspace*{-\leftmargin}\hrulefill
\State Initialize empty lists $P_0,\dots,P_n$
\State Initialize dictionaries $R_0,\dots ,R_n$

\For{$i=1$ to length of filtration}

\State $x$, $p$, $\partial x$ = \texttt{filtration\_enumerator}$(i)$
\State Initialize $r \partial x = \varnothing$

\For{$y \in \partial x$}

\If{$y \in \keys(R_{p-1})$}

\State $r \partial x = r \partial x \triangle R_{p-1}[y]$

\EndIf

\EndFor

\If{$r \partial x = \varnothing$}

\State $R_{p}[x] = \{x\}$ if $p \leq n$

\Else{}

\State Let $r \partial x = \{y_1,\dots ,y_k\}$ with $\filt(y_k) >\filt(y_j)$ for all $1 \leq j \leq k-1$

\For{$z \in \keys(R_{p-1})$ such that $y_k \in R_{p-1}[z]$}

\State $R_{p-1}[z] = R_{p-1}[z] \triangle r \partial x$

\EndFor

\State Append $(\filt(y_k),i)$ to $P_{p-1}$
\EndIf
\EndFor
\For{$p=0$ to $n$}
\For{$x \in \keys(R_{p})$ such that $R_{p}[x]=\{x\}$}
\State Append $(\filt(x), \infty)$ to $P_p$
\EndFor
\EndFor
\State \Return $P_0, \dots, P_n$
\end{algorithmic}
\end{algorithm}
 Before discussing the correctness of the algorithm, we first consider the following example: 
 \begin{example} \label{ex:redzed}
     Suppose that an elementwise filtration chain complexes arises from an elementwise filtratation of simplicial complexes of dimension at most 2 with four points, $a$, $b$, $c$, and $d$. 

        \begin{center}
        \begin{tikzpicture}[scale=2]

        \coordinate (a) at (0,1);
        \coordinate (b) at (1,1);
        \coordinate (c) at (1,0);
        \coordinate (d) at (0,0);

        \draw (a) -- (b);
        \draw (a) -- (c);
        \draw (a) -- (d);
        \draw (b) -- (c);
        \draw (b) -- (d);
        \draw (c) -- (d);

        \fill (a) circle (1.5pt) node[above left] {$a$};
        \fill (b) circle (1.5pt) node[above right] {$b$};
        \fill (c) circle (1.5pt) node[below right] {$c$};
        \fill (d) circle (1.5pt) node[below left] {$d$};
        \end{tikzpicture}
     \end{center}
     
     The refined linear ordering of the filtration is given by:
     \[a \to b \to c \to d \to ab \to bc \to cd \to ad \to ac \to acd \to abc \to bd \to bcd \to abd\text{.}\]
     In other words, it is a filtration of the boundary of the 3-simplex.
    
     We are interested in computing homology in dimensions 0 and 1. 
     The following table represents the operations performed in the execution of Naive \redzed{} at each step of the filtration, and a detailed description is given below.
    \begin{center}
    \begingroup
    \small
    \setlength{\tabcolsep}{6pt}
    \renewcommand{\arraystretch}{0.98}
    
    \begin{tabular}{|c!{\vrule width 1.2pt}c|c|c|c|c|}
    \hline
    Simplex $x$ & $\partial x$ & $r \partial x$ & $R_0$ & $R_1$ & Pairs\\
    \noalign{\hrule height 1.2pt}
    
    $a$ & $0$ & $0$ & $a \to a$ &  &\\
    \hline
    
    $b$ & $0$ & $0$ &
    \begin{tabular}{@{}c@{}}
    $a \to a$ \\
    $b \to b$
    \end{tabular}
    &  &\\
    \hline
    
    $c$ & $0$ & $0$ &
    \begin{tabular}{@{}c@{}}
    $a \to a$ \\
    $b \to b$ \\ 
    $c \to c$
    \end{tabular}
    &  &\\
    \hline
    
    $d$ & $0$ & $0$ &
    \begin{tabular}{@{}c@{}}
    $a \to a$ \\
    $b \to b$ \\ 
    $c \to c$ \\
    $d \to d$
    \end{tabular}
    &  &\\
    \hline
    
    $ab$ & $a+b$ & $a+b$ &
    \begin{tabular}{@{}c@{}}
    $a \to a$ \\
    $b \to a$ \\ 
    $c \to c$ \\
    $d \to d$
    \end{tabular}
    &  
    & $(b,ab)$\\
    \hline
    
    $bc$ & $b+c$ & $a+c$ &
    \begin{tabular}{@{}c@{}}
    $a \to a$ \\
    $b \to a$ \\ 
    $c \to a$ \\
    $d \to d$
    \end{tabular}
    & 
    & $(b,ab)$, $(c,ac)$\\
    \hline
    
    $cd$ & $c+d$ & $a+d$ &
    \begin{tabular}{@{}c@{}}
    $a \to a$ \\
    $b \to a$ \\ 
    $c \to a$ \\
    $d \to a$
    \end{tabular}
    &  
    & 
    \begin{tabular}{@{}c@{}}
    $(b,ab)$, $(c,ac)$, $(d,cd)$
    \end{tabular} \\
    \hline
    
    $ad$ & $a+d$ & $a+a=0$ &
    \begin{tabular}{@{}c@{}}
    $a \to a$ \\
    $b \to a$ \\ 
    $c \to a$ \\
    $d \to a$
    \end{tabular}
    &  $ad \to ad$ 
    & 
    \begin{tabular}{@{}c@{}}
    $(b,ab)$, $(c,ac)$, $(d,cd)$
    \end{tabular}\\
    \hline
    
    $ac$ & $a+c$ & $a+a=0$ &
    \begin{tabular}{@{}c@{}}
    $a \to a$ \\
    $b \to a$ \\ 
    $c \to a$ \\
    $d \to a$
    \end{tabular}
    &  
    \begin{tabular}{@{}c@{}}
    $ad \to ad$ \\
    $ac \to ac$
    \end{tabular}
    & 
    \begin{tabular}{@{}c@{}}
    $(b,ab)$, $(c,ac)$, $(d,cd)$
    \end{tabular}
    \\
    \hline
    
    $acd$ & $bd+ad+ac$ & $ad+ac$ &
    \begin{tabular}{@{}c@{}}
    $a \to a$ \\
    $b \to a$ \\ 
    $c \to a$ \\
    $d \to a$
    \end{tabular}
    &  
    \begin{tabular}{@{}c@{}}
    $ad \to ad$ \\
    $ac \to ad$
    \end{tabular}
    &
    \begin{tabular}{@{}c@{}}
    $(b,ab)$, $(c,ac)$, $(d,cd)$,\\
    $(ac,acd)$
    \end{tabular}\\
    \hline
    
    $abc$ & $bc+ac+ab$ & $ad$ &
    \begin{tabular}{@{}c@{}}
    $a \to a$ \\
    $b \to a$ \\ 
    $c \to a$ \\
    $d \to a$
    \end{tabular}
    &   
    &
    \begin{tabular}{@{}c@{}}
    $(b,ab)$, $(c,ac)$, $(d,cd)$,\\
    $(ac,acd)$, $(ad,abc)$
    \end{tabular}\\
    \hline
    
    $bd$ & $b+d$ & $a+a=0$ &
    \begin{tabular}{@{}c@{}}
    $a \to a$ \\
    $b \to a$ \\ 
    $c \to a$ \\
    $d \to a$
    \end{tabular}
    &   
    \begin{tabular}{@{}c@{}}
    $bd \to bd$
    \end{tabular}
    & 
    \begin{tabular}{@{}c@{}}
    $(b,ab)$, $(c,ac)$, $(d,cd)$,\\
    $(ac,acd)$, $(ad,abc)$
    \end{tabular}\\
    \hline
    
    $bcd$ & $cd+bd+bc$ & $bd$ &
    \begin{tabular}{@{}c@{}}
    $a \to a$ \\
    $b \to a$ \\ 
    $c \to a$ \\
    $d \to a$
    \end{tabular}
    &   
    &
    \begin{tabular}{@{}c@{}}
    $(b,ab)$, $(c,ac)$, $(d,cd)$,\\
    $(ac,acd)$, $(ad,abc)$, $(bd,bcd)$
    \end{tabular}\\
    \hline
    
    $abd$ & $bd+ad+ab$ & $0$ &
    \begin{tabular}{@{}c@{}}
    $a \to a$ \\
    $b \to a$ \\ 
    $c \to a$ \\
    $d \to a$
    \end{tabular}
    &   
    &
    \begin{tabular}{@{}c@{}}
    $(b,ab)$, $(c,ac)$, $(d,cd)$,\\
    $(ac,acd)$, $(ad,abc)$, $(bd,bcd)$
    \end{tabular}\\
    \hline
    
    \end{tabular}
    
    \endgroup
    \end{center}
     The last step in the algorithm is to add an interval $(a, \infty)$ since it is the only remaining value in $R_0$. 

     Each of the first four steps adds a single vertex, which becomes a generator in $H_0$, and thus $R_0[x]=\{x\}$ in each case. 
     When $ab$ is added, we have $\partial (ab) = r \partial (ab) = a+b$. 
     Since $b$ is younger, we set $R_0[b]=\{a\}$ and add the pair $(b,ab)$. 
     When $bc$ is added, we have $\partial (bc) = b+c$, but since $R_0[b]=\{a\}$, we get $r \partial bc=a+c$. 
     Since $c$ is younger, we set $R_0[c]=a$ and add $(b,bc)$ to the set of pairs.
     A similar process is repeated for $cd$.

     When $ad$ is added, we have $\partial (ad) = a+d$, and since $R_0[d]=\{a\}$, $r \partial (ad)=a+a=0$. 
     Thus $ad$ is a birth, and we set $R_1[ad]=\{ad\}$. 
     When $ac$ is added, it is also a birth, and we set $R_1[ac]=\{ac\}$. 
     When $acd$ is added, we have that $\partial (acd) = bd+ad+ac$. 
     We have that $R_1[ac]=\{ac\}$, $R_1[ad]=\{ad\}$, and $bd \notin \keys(R_1)$, so $r \partial abd = ad+ac$. 
     We thus set $R_1[ac]=\{ac\} \triangle \{ad,ac\}=\{ad\}$, and add $(ac,acd)$ as a pair. 
     When $abc$ is added we get that $\partial (abc) = bc+ac+ab$, and thus $r \partial (abc) = R_1[ac]=ad$. 
     We thus update $R_1[ac]=\{ad\} \triangle \{ad\} = \varnothing$, so we delete $R_1[ac]$. 
     Identically we delete $R_1[ad]$, and add $(ad,abc)$ as a pair. 
     The remaining steps in the algorithm are all similar cases to the steps above.  
 \end{example}

We are now ready to state and prove correctness of Naive \redzed.

\begin{theorem} \label{TH:redzed_correct}
    Naive \redzed{} correctly computes all persistence pairs for an elementwise filtration $C^*$. 
\end{theorem}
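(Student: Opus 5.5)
Our plan is to prove, by induction on $i$, an invariant describing what the dictionaries hold after step $i$. From this invariant we then read off that the emitted pairs are exactly the persistence pairs produced by the standard persistence pairing algorithm. Those pairs determine the decomposition of \cref{TH:decomposition}.

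The invariant is as follows. After processing the $i$-th element, the dictionaries $R_p$ encode a quasi-isomorphism $r^i\colon C^i\to A^i$ with zero differentials, of the form constructed in the proof of \cref{prop:reduction}. Here $A^i_p$ has basis the set $B^i_p$ of birth elements $x$ of dimension $p$ that are still unpaired. Three conditions should hold for every $p$.
\begin{enumerate}
\item For each birth $z$ with $\filt(z)\le i$, we have $R_p[z]=r^i_p(z)\subseteq B^i_p$, and every element of $R_p[z]$ has filtration time at most $\filt(z)$.
\item For each unpaired $x\in B^i_p$, we have $R_p[x]=\{x\}$.
\item Each death $z$ satisfies $r^i(z)=0$, which is why deaths are never keys.
\end{enumerate}

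For the inductive step, the loop over $\partial x$ computes $r^{i-1}_{p-1}\partial x$ by linearity, so the case split matches \cref{prop:reduction}. In Case 1, setting $R_p[x]=\{x\}$ is the pushout extension. In Case 2, the map $A^{i-1}\to A^i$ quotients by the span of $\{y_1,\dots,y_k\}$, and we choose the basis $B^{i-1}_{p-1}\setminus\{y_k\}$. Under this quotient a vector $B$ maps to $B$ if $y_k\notin B$, and to $B\triangle\{y_1,\dots,y_k\}$ otherwise. This is exactly the update loop, so $r^i$ is the induced map and remains a quasi-isomorphism. Filtration times only decrease, since $y_k$ is the youngest element of the set it is replaced by, so the ordering condition in (1) is preserved.

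The main obstacle is showing that the pair $(\filt(y_k),i)$ is the correct persistence pair: the class born at $y_k$ is the one that dies at $i$ under the elder rule. For this we compare with the matrix reduction. By induction, for each unpaired birth $y$ we track a cycle representative $c_y$ with youngest term $y$, with $r(c_y)=\{y\}$; the cycle $c_y$ is the column of $y$ in the reduced $V$ matrix. We then show that $r^{i-1}\partial x$ determines the reduced column of $x$ modulo boundaries. Concretely, $\partial x$ is homologous in $C^{i-1}$ to $\sum_j c_{y_j}$. Let $s$ be the standard algorithm's reduction of $\partial x$ by earlier columns. The low of $s$ equals the largest filtration time of a surviving cycle class appearing in the expansion of $\partial x$ in the basis $\{c_y\}$ together with boundaries. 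The reduced pivots of the already-added death columns are exactly the killed $y$'s. So the pivot is $y_k$, since it is the youngest element and condition (1) bounds the filtration times.

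Equivalently, and more cleanly, we verify the rank identity. For each $a<b$ and $p$, the number of emitted pairs $[s,t)$ with $s\le a<b\le t$ equals $\operatorname{rank}\bigl(H_p(C^a)\to H_p(C^b)\bigr)$. Via the quasi-isomorphisms, this rank equals $\operatorname{rank}(A^a\to A^b)$. In the chosen bases, $A^{a}\to A^{b}$ sends a basis element $y$ to a vector whose youngest term is $y$ itself if $y$ is still unpaired at $b$, and to $0$ otherwise. This follows by induction from the update rule, because replacing $y_k$ by younger-free combinations never introduces $y$ or anything younger than it. The matrix of $A^a\to A^b$ is therefore in echelon form, and its rank is the number of basis elements of $A^a$ still unpaired at $b$. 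Since births that are never paired give $[\filt(x),\infty)$ at the end, the multiset of pairs is determined by these ranks, and uniqueness in \cref{TH:decomposition} finishes the proof.
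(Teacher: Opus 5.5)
Your argument takes a different route from the paper's, and one step in it is false as written, though it is easily repaired.

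Both proofs rest on the same invariant: the paper's lemma that $R_p$ encodes $r^i_p$, together with the fact that a killed $y_k$ is only ever replaced by strictly older unpaired elements. The paper then builds an explicit isomorphism of persistence modules between $A^*_p$ and the direct sum of the emitted intervals. It replaces each eventually-killed basis element $x$ by $x+s_i^j(x)$, with $j$ the time at which $x$ is killed (an elder-rule basis), and checks that the squares relating $A^i\to A^{i+1}$ to the interval module commute. You instead compute $\operatorname{rank}(A^a_p\to A^b_p)$ and use the standard fact that, over a finite totally ordered set, these ranks determine the barcode by inclusion--exclusion. This spares you the change of basis and the square-by-square check, at the cost of quoting that fact. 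Your paragraph comparing with matrix reduction is not a proof as it stands, but you do not need it.

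The false step is your description of $A^a\to A^b$. In the basis $B^a_p$ of unpaired births that you fixed, a basis element $y$ killed at some time $t\le b$ is not sent to $0$. It goes to $r^b_p(y)=R_p[y]$, which right after the kill equals $\{y_1,\dots,y_{k-1}\}$ and is nonzero whenever $k\ge 2$. For instance, in Example~\ref{ex:redzed}, adding the edge $ab$ sends the vertex $b$ to $a$. So the matrix does not have the echelon shape you claim.

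The rank count nevertheless survives, using your invariant (1). The vector $r^b_p(y)$ consists of elements of $B^b_p$ with filtration time less than $\filt(y)\le a$. These lie in $B^a_p\cap B^b_p$ and are fixed by $A^a\to A^b$. Hence the image is exactly the span of $B^a_p\cap B^b_p$, and the rank is $|B^a_p\cap B^b_p|$. Your description (killed elements go to $0$, survivors to vectors with youngest term $y$) is accurate in the paper's elder-rule basis, not in $B^a_p$.

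Two further fixes are needed:
\begin{enumerate}
\item An interval $[s,t)$ contributes to $\operatorname{rank}(H_p(C^a)\to H_p(C^b))$ exactly when $s\le a\le b<t$, not $b\le t$. A class paired at time $t$ is already zero in $H_p(C^t)$.
\item You need the case $a=b$ as well, to recover intervals of length one.
\end{enumerate}
With these repairs your argument is complete.
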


The proof requires the following lemma:

\begin{lemma} \label{lemma:R_p correct}
    For an elementwise filtration, after every step $i$ in the filtration and for every $0 \leq p \leq n$, $R_p$ correctly encodes $r^i_p$, in that for every basis element $x \in C^i_p$: 
    $$r^i_p(x) = \begin{cases}
        R_p[x] & \text{if } x \in \keys(R_p) \\
        0                                   & \text{if } x \notin \keys(R_p)
    \end{cases}$$
\end{lemma}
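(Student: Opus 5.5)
We argue by induction on the step $i$, comparing the updates made to the dictionaries $R_0,\dots,R_n$ by Naive \redzed{} with the explicit construction of $r^i$ in the proof of \cref{prop:reduction}. Throughout we identify $A^i_p$ with the span of those keys $x$ of $R_p$ with $R_p[x]=\{x\}$, i.e.\ the birth elements of dimension $p$ that are still alive. The inductive hypothesis is slightly stronger than the statement: besides the formula for $r^{i}_p$, it records that every value $R_p[z]$ is a subset of this basis of $A^i_p$. This is what makes symmetric differences of values meaningful as vectors of $A^i_p$. For $i=0$ all dictionaries are empty and $C^0=A^0=0$, so there is nothing to check.

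For the inductive step, let $x$ of dimension $p$ be added at time $i$. The inner loop computes $r\partial x=\bigtriangle_{y\in\partial x} R_{p-1}[y]$, summing only over $y\in\keys(R_{p-1})$. By the inductive hypothesis and linearity of $r^{i-1}_{p-1}$, this equals $r^{i-1}_{p-1}\partial x$, so the algorithm's branch agrees with the case split in the proof of \cref{prop:reduction}.

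\textbf{Case 1} ($r\partial x=\varnothing$). Here $A^i$ is $A^{i-1}$ with a new generator in degree $p$, $r^i$ restricts to $r^{i-1}$ on $C^{i-1}$, and $r^i_p(x)$ is the new generator. The algorithm sets $R_p[x]=\{x\}$ and leaves everything else unchanged, so the formula holds for all basis elements of $C^i$. When $p>n$ nothing is recorded, and this is harmless because the statement only concerns $p\le n$.

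\textbf{Case 2} ($r\partial x=\{y_1,\dots,y_k\}\neq\varnothing$). Here $A^i$ is the pushout of $A^{i-1}\leftarrow D(p)\to 0$. Concretely, $A^i_{p-1}=A^{i-1}_{p-1}/\langle y_1+\dots+y_k\rangle$ and $A^i_q=A^{i-1}_q$ for $q\ne p-1$; the generator $x$ is sent to $0$. The main obstacle is to make this quotient concrete in the chosen basis. I would fix the basis of $A^i_{p-1}$ to be the images of $\{y\ne y_k\}$, which is legitimate because $y_k$ occurs in the relation. Under this identification the quotient map sends $y_k\mapsto y_1+\dots+y_{k-1}$ and fixes the other basis vectors. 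Since $r^i$ on $C^{i-1}$ is the composite of $r^{i-1}$ with this quotient map, we get $r^i_{p-1}(z)=R_{p-1}[z]\triangle\{y_1,\dots,y_k\}$ when $y_k\in R_{p-1}[z]$, and $r^i_{p-1}(z)=R_{p-1}[z]$ otherwise. This is exactly the update loop. Two further points complete the case. First, $x$ is not added as a key, matching $r^i_p(x)=0$. Second, keys whose value becomes $\varnothing$ still satisfy the formula, because both sides are $0$. The strengthened invariant survives the update: new values avoid $y_k$, and $R_{p-1}[y_k]$ itself becomes $\{y_1,\dots,y_{k-1}\}$, so $y_k$ is no longer a basis element.

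Finally, one should note that the choice of $y_k$ as the youngest element is irrelevant for this lemma, since any element of the relation can be eliminated. That choice matters only for the pairing in \cref{TH:redzed_correct}.
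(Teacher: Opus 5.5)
Your proof is correct and follows essentially the same route as the paper's. You argue by induction on $i$, match the algorithm's branch to the case split in the proof of \cref{prop:reduction}, add a new generator in Case 1, and in Case 2 eliminate $y_k$ from the basis so that $r^i_{p-1}(z)=R_{p-1}[z]\triangle\{y_1,\dots,y_k\}$ exactly when $y_k\in R_{p-1}[z]$. Your extra invariant and bookkeeping (values as subsets of the basis, the case $p>n$, keys with emptied values) only make explicit what the paper leaves implicit.

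One small slip: $A^i$ is not the pushout of $A^{i-1}\leftarrow D(p)\to 0$, because any chain map $D(p)\to A^{i-1}$ kills the degree-$(p-1)$ generator when $A^{i-1}$ has zero differentials. It is the pushout of $A^{i-1}\leftarrow S(p-1)\to 0$, where $S(p-1)$ picks out $r^{i-1}_{p-1}\partial x$. The concrete description of $A^i$ you give right after is correct, so nothing downstream is affected.
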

\begin{proof}
We proceed by induction. 
    The base case is immediate, since $C^0_p = 0 = A^0_p$ and $R_p$ and $R_p^{-1}$ are empty.
    Now suppose that for all $p$, the dictionary $R_p$ correctly encodes $r^i_p$. 
    Suppose at time $i$ an element of dimension $p$ is added. 
    We have two cases:

    \textbf{\underline{Case 1:}} $r^{i-1}_{p-1} \partial x=0$. 
    In this case, we know that we land in case 1 of the algorithm since $r \partial x$ computed by the algorithm is equal to the symmetric difference of $R_{p-1}[y]$ over all $y \in \partial x$. This in turn is equal to the addition mod $2$ of $r^{i-1}_{p-1}y$ over all $y \in \partial x$ since by assumption $R_{p-1}$ correctly encodes $r^{i-1}_{p-1}$. 
    The element $y \in \partial x$ that are not keys in $R_{p-1}$ are such that $r^{i-1}_{p-1}y=0$.

    By \cref{prop:reduction}, the vector space $A^i$ is obtained by adding $x$ as a new basis element to $A^{i-1}_p$ and setting $r^i_p(x)=x$. 
    This is exactly what the algorithm does, by setting $R_p[x]=\{x\}$. 
    Since $r^i_{p}$ remains the same on all other basis elements, and we did not change $R_{p}$ aside from adding $x$, we are done.

    \textbf{\underline{Case 2:}} $r^{i-1}_{p-1}x \neq 0$. 
    Again by assumption since $R_{p-1}$ encodes $r^{i-1}_{p-1}$, we get that 
    $r\partial x$ computed by the algorithm is equal to $r^{i-1}_{p-1}x$, so we land in case 2 of the algorithm. 
    Suppose $r \partial x = \{y_1,\dots,y_k\}$ for basis elements $y_1, \ldots, y_k$, where $y_k$ is the youngest element in the set.
    In this case, only $A^{i-1}_{p-1}$ is changed, with $r \partial x = \{y_1,\dots,y_k\}$ quotiented to zero, and $y_k$ removed from the basis. 
    Suppose $z$ is a basis element of $C^{i-1}_{p-1}$ such that $r^{i-1}_{p-1} = B$, where $y_k \in B$. 
    Then $r^i_{p-1}(z) = B \triangle \{y_1,\dots,y_k\}$.
    Thus for every $z$ such that $y_k \in R_{p-1}[z]$, we need to replace $y_k$ with $y_1,\dots ,y_{k-1}$, which is exactly done by lines 15-17 in the algorithm.
    Thus after all $z \in \keys(R_{p-1})$ with $y_k \in R_{p-1}[z]$ are processed, $R_{p-1}$ correctly encodes $r^i_{p-1}$. 
    Since only dimension $p-1$ was changed, this still holds for all other dimensions. 
\end{proof}

We now prove \cref{TH:redzed_correct}.

\begin{proof}[Proof of \cref{TH:redzed_correct}]
    Since each $r_i$ is a quasi-isomorphism and each $A^i$ has zero differentials, in each dimension $p$ the sequence 
    \[A^0_p \to A^1_p \to \dots \to A^n_p\]
    is the persistence module associated with the $p$-th homology of $C^*$. 
    We thus need to show that the decomposition given by \redzed{} is a decomposition of this persistence module. 

    For $i<j$, we want to understand the transition maps $A^i \to A^j$, which we will denote $s_i^j$. 
    It suffices to determine it degreewise on basis elements. 
    Suppose $x \in A^{i}_p$ is a basis element, i.e., a key in $R_p$ such that after time $i$, $R_p[x]=\{x\}$ . 
    Then we can view $x$ as an element of $C^i_p$ such that $r^i_p(x)=x$. 
    By commutativity of the diagram
    \begin{center}
    \begin{tikzcd}
    C^i \arrow[d, "r^i",swap] \arrow[r, hook] & C^j \arrow[d, "r^j"] \\
    A^i \arrow[r, "s_i^j"]               & A^j                 
    \end{tikzcd}
    \end{center}
    we get that $s^i_j(x) = r^j_p(x)$. 
    After step $j$ of the algorithm, by \cref{lemma:R_p correct}, we thus have that:
    \[s^i_j(x) = \begin{cases}
         R_p[x] & x\in \keys(R_p)\\
        0 & x\notin \keys(R_p)\text{.}
        \end{cases}\]

    Let $B^*$ be the persistence module given by the algorithm. 
    That is, $B^* = \bigoplus_I [b_i,d_i)$, where $\{(b_i,d_i)\}_I$ are the pairs given by the algorithm. 
    We want to show that $B^* \cong A^*$. 
    Note that since $B^*$ may contain infinite intervals, we are viewing $A^*$ as infinite as well, being the identity at each step after the final filtration time.

    We can choose a basis for $B^i$ to be the set of all $x$, such that $x$ is a birth element corresponding to an interval $(j,k)$ where $j \leq i <k$. 
    Since the basis for $A^i$ is exactly that set of elements, we see that $A^i \cong B^i$ for all $i$. 
    The natural map exhibiting this isomorphism need not be an isomorphism $A^* \cong B^*$, however, as it may fail the necessary commutativity conditions. 

    Instead, we need to choose a new basis for $A^i$. 
    We do this by replacing each basis element $x$ with a new basis element as follows. 
    Suppose $x$ is a basis element. 
    If $s_i^jx=x$ for all $j \geq i$, then we do not replace $x$, and define the value of the map $A^i \to B^i$ at $x$ to be $x$.
    Otherwise, suppose $j$ is minimal such that $s_i^jx = \{y_1, \dots, y_k\}$ for $x \notin \{y_1, \dots, y_k\}$. 
    Then replace $x$ in the basis with $\{y_1, \dots, y_k , x\}$. 
    Note that by construction, $x$ is younger than all of $y_1 , \dots ,y_k$, so after we do this we still maintain a basis.
    For the same reason, this is also well defined, since each $y_1,\dots, y_k$ being older than $x$ ensures they are present at time $i$.
    Then define $A^i \to B^i$ to send $\{y_1, \dots, y_k,  x\}$ to $x$. 
    Since this map is an isomorphism on basis elements under this new basis, it is still an isomorphism.
    It suffices to show that for all $i$ the square 
    \begin{center}
        \begin{tikzcd}
        A^i  \arrow[d] \arrow[r] & A^{i+1} \arrow[d] \\
        B^i \arrow[r]            & B^{i+1}          
        \end{tikzcd}
    \end{center}
    commutes. 

    Since $C^*$ is an elementwise filtration, either a new basis element gets added into $A^{i+1}$ or a generator dies. 
    If a new basis element $x$ gets added in at time $i+1$, then $A^{i+1}\to B^{i+1}$ will be the same map as $A^i \to B^i$ on all the basis elements of $A^i$, so we are done. 
    Now suppose $s_i^{i+1}x = \{y_1, \dots, y_k\}$. 
    Then $\{y_1, \dots, y_k, x\}$ is a basis element of $A_i$ that gets mapped to $x \in B^i$. 
    Moreover, an element $z$ must have been added at time $i+1$ such that $r \partial z = \{y_1, \dots, y_k, x\}$. 
    Thus $(\filt(x),i+1)$ gets stored as a pair by the algorithm, and thus $x$ gets mapped to $0$ by $B^i \to B^{i+1}$. 
    We also have that $$s_i^{i+1}\{y_1, \dots, y_k,x\} = \{y_1,\dots,y_k, x\}\triangle \{y_1, \dots, y_k, x \}=\varnothing$$
    So the square commutes for this basis element. 
    Note that this is the only basis element containing $x$, since $x$ cannot appear in $r \partial z$ for any future $z$. 
    Thus for all other basis elements, $A^i \to A^{i+1}$ is the identity, and hence the square commutes.
\end{proof}

\begin{remark} \label{rmk:works_over_field}
    The \redzed{} algorithm can be modified to work over an arbitrary field $F$. 
    To do this, the values of the dictionaries $R_p$ need to be weighted sets, with the weights being scalars in $F$. 
    For $r \partial x = a_1y_1+\dots +a_ky_k$, instead of taking the symmetric difference $R_p[z] \triangle \{y_1,\dots,y_k\}$ for all $z \in R^{-1}_p[y_k]$, one must subtract the weighted set $\{a_1y_1,\dots , a_ky_k\}$ from $R_p[z]$ with a sufficient coefficient to cancel the $y_k$ term in $R_p[z]$. 
\end{remark}

Finally, we comment on the relation between Naive \redzed{} and the algorithm found in \cite{Dey-Fan-Wang}.

\begin{remark} \label{rmk:dey-fan-wang}
For elementwise filtrations of chain complexes arising from filtrations of simplicial complexes, our maps $r^i_p\colon C^i_p \to A^i_p$ recover the notion of an \emph{annotation} $a \colon K(p) \to \mathbb{F}_2^g$ used in \cite{Dey-Fan-Wang}.
Specifically, if $K^i$ is the $i$-th simplicial complex in a filtration and $C(K^i)$ is the chain complex generated by simplices, the map $r^i_p \colon C_p(K^i) \to A_i$ can be viewed as an annotation $K(p) \to A_i \cong \mathbb{F}_2^g$, where $K^i(p)$ is the set of $p$-simplices in $K^i$. 
In particular, this annotation is valid in the sense of \cite[Def. 3.2]{Dey-Fan-Wang}, and is the annotation used in \cite[\S4.1]{Dey-Fan-Wang} for elementary inclusions.
For a simplex $\sigma$ added at time $i$, our cases of $r \partial \sigma=0$ and $r \partial \sigma\neq0$ are the same as cases $a \partial \sigma = 0 $ and $a \partial \sigma \neq 0$ of \cite{Dey-Fan-Wang}. 
While the exact implementation differs, Naive \redzed{}'s updates to $R_p$ are analogous to the updates performed in \cite{Dey-Fan-Wang} to the annotation. 
In this sense, \redzed{} can be viewed as a different perspective on and a generalization of a part of the \cite{Dey-Fan-Wang} algorithm. 
\end{remark}

\subsection{Matrix interpretation}

The formulation of Naive \redzed{} given above conceals its relation to the persistence pairing algorithm.
To elucidate this relation, we will translate Naive \redzed{} into the matrix-theoretic language and explain the specific speedups uncovered by this translation.

To begin, let us briefly review the standard persistence pairing algorithm, described in greater detail in \cref{sec:prelims}.
It works with the filtration boundary matrix $M$, and applies a reduction algorithm, reducing columns left to right. 
    To reduce a given column with index $M_x$, columns $M_y$ with $\filt(y)<\filt(x)$ are added to $M_x$, until it either has a pivot distinct from all previous columns or is zero. 
    After this reduction is done for all columns, the persistence pairs can be read off from the pivot entries. 
    A pivot entry $(x,y)$ in corresponds to a persistence pair $(\filt(x),\filt(y))$ of dimension $\dim(x)$, and tells us that the $M_x$ column of gets reduced to zero. 
    The infinite intervals may be read off by finding all zero columns $M_x$ that do not appear in any of the pairs $(x,y)$.
    If $x$ is such an element, then $(\filt(x), \infty)$ is an infinite interval in the diagram.

    Of relevance to us are three speedups that can be applied to the algorithm, namely, compression, exhaustive reduction, and retrospective reduction.
    These were studied in detail in \cite{keeping-it-sparse} and we refer interested readers to that reference for a comprehensive overview.

    1.~\emph{Compression} \cite{clear-and-compress} is the process of deleting rows corresponding to death elements once they are identified.
    These rows cannot affect the reduction and can therefore be omitted. 

    2.~\emph{Exhaustive reduction} \cite{edelsbrunner-olsbock,edelsbrunner-zomorodian,zomorodian-carlsson:computing-persistent-homology} reduces each column even after finding a unique pivot by adding previous columns that have pivot entries in positions where the current column has nonzero entries. 
    Its purpose is to sprasify the matrix during reduction.

    3.~\emph{Retrospective reduction} \cite{keeping-it-sparse} uses the current column to reduce all previous columns via right-to-left addition and thus zeroes the entire row to the left of the pivot. 
    (This is a slightly different formulation than that in \cite{keeping-it-sparse}, but is the one relevant to our algorithm.) 
    Retrospective reduction occasionally refers to the combination of retrospective and exhaustive reduction; however, we refer to the two processes separately. 
    Like exhaustive reduction, retrospective reduction has also been used to sparsify the matrix during reduction. 

    To translate between the language of \redzed{} and the persistence pairing algorithm, we first fix some notation. 
\begin{notation}
    We write $y \in M_x$ to mean that $M_x$ has a nonzero entry in the row corresponding to $y$. 
    If $y$ is a birth element that has been killed, this corresponds to a unique column, denoted $M^y$, that has pivot $y$ (again, meaning the pivot entry is in the row corresponding to $y$).
\end{notation}
    
    The main connection between \redzed{} and the standard algorithm comes from the following proposition:

    \begin{proposition} \label{prop:matrix}
        Let $C^*$ be an elementwise filtration, and $M$ the corresponding filtration boundary matrix. 
        Suppose an element $x$ is added at time $i$. 
        Let $M_x$ be the reduced column in $M$ after step $i$ using compression, and exhaustive and retrospective reductions.
        Then $r \partial x$ computed by Naive \redzed{} is exactly the set of nonzero entries in $M_x$.
    \end{proposition}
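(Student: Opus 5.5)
We argue by induction on the filtration time $i$, proving a stronger invariant that describes the whole reduced matrix in terms of the dictionaries. The invariant is: after step $i$ (using compression together with exhaustive and retrospective reduction), the rows surviving compression in dimension $p-1$ are exactly the basis elements of $A^i_{p-1}$, i.e.\ the keys $y$ with $R_{p-1}[y]=\{y\}$. Moreover, for every basis element $w$ of $C^i_{p-1}$ one has
\[
R_{p-1}[w] \;=\; \{w\}\,\triangle\, \bigtriangle_{\,y \in T_w} M_{x_y},
\]
where $M_{x_y}$ denotes the reduced column whose pivot is $y$, and $T_w$ records which pivot columns have been added along the row of $w$. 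The cleanest form of this statement is that $r^i_{p-1}$ is the map sending a row vector to its residue modulo the span of the already reduced columns, expressed in the surviving basis. The proposition then follows by applying this to the column $M_x$ of the newly added element.

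\textbf{Step 1: what reducing the new column computes.} We start from $M_x=\partial x$. Compression deletes every row corresponding to a death element. Exhaustive reduction then repeatedly clears any entry in the row of a killed birth element $y$ by adding the column $M^y$, and it continues until no entry of $M_x$ lies in a pivot row. The result is the unique representative of the class of $\partial x$ modulo the span of the earlier reduced columns that is supported on the unpaired birth rows. By retrospective reduction, each earlier pivot column $M^y$ has its row-$y$ entry as the only entry among pivot rows. So the representative is obtained by sending each $w\in\partial x$ to its own normal form and summing. This is exactly the formula used to compute $r\partial x$ in Naive \redzed{}, provided we know that $R_{p-1}[w]$ equals the normal form of $w$.

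\textbf{Step 2: show that $R_{p-1}[w]$ is the normal form of $w$.} We prove this inductively, matching the two cases of \cref{lemma:R_p correct}.
\begin{itemize}
\item In a birth step, nothing in dimension $p-1$ changes.
\item In a death step with $r\partial x=\{y_1,\dots,y_k\}$, the new pivot is $y_k$, the youngest element. Compression deletes row $y_k$ afterwards; it can delete it because $y_k$ is now paired. Retrospective reduction adds $M_x$ to every earlier column with an entry in row $y_k$, so row $y_k$ becomes zero to the left of the pivot. Simultaneously, the normal form of every $w$ whose normal form contains $y_k$ changes by $\triangle\{y_1,\dots,y_k\}$. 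This is exactly lines 15--17 of Naive \redzed{}.
\end{itemize}
Deaths (rows removed by compression) and elements whose normal form is zero correspond to elements mapped to $0$, i.e.\ non-keys, in agreement with \cref{lemma:R_p correct}.

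\textbf{Main obstacle.} The delicate point is that, after all three speedups, the reduced columns really do give canonical normal forms. This requires checking two things:
\begin{itemize}
\item Exhaustive and retrospective reduction keep every pivot column free of other pivot rows at all times.
\item Deleting rows under compression does not lose information.
\end{itemize}
The second holds because deleted rows are already zero in every later column after exhaustive reduction. I would isolate the first as a separate inductive claim: after step $i$, each pivot column $M^y$ has entries only in row $y$ and in unpaired birth rows. This claim is preserved because a new pivot column $M_x$ is fully reduced, and retrospective reduction removes the new pivot row $y_k$ from all older columns. With this claim, the normal form is unique, and Steps 1 and 2 combine to give the proposition.
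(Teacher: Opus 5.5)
Your strategy is the paper's. You strengthen the induction so that $R_{p-1}[w]$ is identified with data of the reduced matrix; your ``normal form'' of $w$ is exactly the paper's invariant: $\{w\}$ for an unpaired birth, $M^w\triangle\{w\}$ for a paired birth, no key for a death. Your Step~1 is the paper's computation that exhaustively reducing the compressed $\partial x$ yields the symmetric difference of the $R_{p-1}[y]$ over $y\in\partial x$. Your Step~2 is the paper's matching of retrospective reduction with lines 15--17. That core is sound.

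What is wrong is your account of compression outside Step~1. Compression deletes only rows of death elements (as you say in Step~1), never the row of a paired birth. So the rows surviving compression in dimension $p-1$ are all birth rows, not just the basis of $A^i_{p-1}$, and ``compression deletes row $y_k$ afterwards'' is false. It could not be otherwise. Row $y_k$ holds the pivot of $M_x$, from which the pair $(\filt(y_k),i)$ is read. A later column containing $y_k$ must receive $M_x$ under exhaustive reduction, which replaces $y_k$ by $y_1,\dots,y_{k-1}$ rather than dropping it, mirroring $R_{p-1}[y_k]=\{y_1,\dots,y_{k-1}\}$. If paired rows really were deleted, the proposition would fail: in \cref{ex:redzed}, deleting rows $b,c,d$ as they get paired leaves the column of $ad$ equal to $\{a\}$, whereas $r\partial(ad)=\varnothing$. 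What is true, and what your Steps~1--2 actually use, is that after exhaustive and retrospective reduction a paired row is zero outside its own pivot column.

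The same confusion affects your ``main obstacle'' paragraph. The rows compression really deletes are not zeroed by exhaustive reduction: a death element is never a pivot, so there is no column with which to clear its row. That is exactly why compression is a separate operation. For this proposition you need not show compression is harmless at all, since it is part of the procedure in the statement. You only need that death rows are absent from every column and that deaths are never keys of $R_{p-1}$, which is how the paper handles it.

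Your displayed formula with the undefined $T_w$ also fails for a death $w$. Compressed pivot columns contain no death rows, so the right-hand side always contains $w$. State the invariant directly as the normal form: delete death rows, then reduce modulo the pivot columns, each of which meets the paired rows only in its own pivot. With these corrections your argument is the paper's proof.
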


    \begin{proof}
        We prove this using induction; however, we require a stronger inductive hypothesis. 
        Namely, we prove that at a step $i$:
        \begin{enumerate}
            \item $r \partial x$ consists of the nonzero entries in $M_x$;
            \item for an element $y$, we have $R_p[y]=\{y\}$ if and only if $y$ is a birth element that does not appear as a pivot of any earlier column $M_z$;
            \item for an element $y$, we have $y \notin R_p[y] = \{z_1, \dots, z_k\}$ if and only if $y$ appears as a pivot of a column $M^y$.
            In this case $\{z_1, \dots, z_k\}$ is the set of nonzero entries that lie above $y$ in $M^y$.
        \end{enumerate}

        The base case is empty.
        The proof in the inductive step proceeds as follows:
        \begin{itemize}
            \item prove 1.~assuming that 2.~and 3.~are true prior to the $i$-th step;
            \item prove 2.~and 3.~hold after the $i$-th step.
        \end{itemize}
    
        If $x$ is a birth element, by correctness of Naive \redzed, we get that $r \partial x=0$, and by correctness of the pairing algorithm $M_x$ gets reduced to zero. 
        In this case, retrospective reduction is not applied, and in Naive \redzed, we set $R_p[x]=\{x\}$ and change nothing else, so we are done.

        Now suppose $x$ is not a birth element.
        We want to show that $r \partial x$ in Naive \redzed{} is the set of all nonzero elements in reduced column $M_x$ after exhaustive reduction and compression. 
        
        The first step in reduction is compression, so we first zero all death entries in $M_x$. 
        In Naive \redzed, this is done implicitly since death entries are not added into $R_p$, so they are ignored in the computation of $r \partial x$.
        We may assume death elements are processed first, so after this step $r \partial x$ is still empty, and the elements in $\partial x$ that have yet to be processed are the nonzero entries that remain in column $M_x$ after compression. 
        We thus may assume that both $\partial x$ and $M_x$ contain no death entries.

        Since all columns to the left of $M_x$ have been exhaustively and retrospectively reduced, they contain no entries that are pivots of any other columns. 
        To reduce exhaustively reduce $M_x$, we thus find all elements $y \in M_x$ that are pivots of previous columns, and cancel them by taking $M_x=M_x\triangle M^y$.
       
        Define
        \[
        I_x = \{y \in \keys{R_p} \ | \ y \notin R_p[y] \} \cap \partial x 
        \qquad \text{and} \qquad 
        J_x = \{ y \in \keys(R_p) \ | \ R_p[y]=\{y\} \} \cap \partial x \text{.}
        \]
        By the inductive hypothesis, $I_x$ is the set of $y \in \partial x$ that appear as pivots of previous columns, and $J_x$ is the set of $y \in \partial x$ that are births that do not appear as previous pivots. 
        Since we are assuming $\partial x$ contains no deaths, $\partial x = I_x \cup J_x$. 
        Writing $M_x'$ for the column $M_x$ after reduction, we have that (note that the $M_x$ on the left is the reduced column and $M_x$ on the right is unreduced):
        \begin{center}
        \begin{align*}
            M_x' & = (\bigtriangle_{y \in I_x} M^y) \triangle M_x \\
            &= (\bigtriangle_{y \in I_x} R_p[y] \triangle \{y\}) \triangle \partial x  && \text{since } M^y = R_p[y] \triangle \{y\} \text{ by I.H.} \\
            &= (\bigtriangle_{y \in I_x} R_p[y]) \triangle I_x \triangle \partial x \\
            &= (\bigtriangle_{y \in I_x} R_p[y]) J_x && \text{since } J_x = \partial x \setminus I_x\\ 
            &= (\bigtriangle_{y \in \partial x} R_p[y])  && \text{since } R_p[y] = \{y\} \text{ for } y \in J_x \text{ and } \partial x = I_x \cup J_x\\
            &= r \partial x
        \end{align*}
        \end{center}
        Thus $r \partial x$ is the set of nonzero entries in $M_x$ after exhaustive and retrospective reduction. 

        It remains to show that after completion of the $i$-th step, 2.~and 3.~still hold.
        Since $x$ is a death, we have that $r \partial x = \{y_1,\dots ,y_k\}$, so $y_k$ is the pivot of $M_x$. 
        For every $z$ with $\filt(z) \leq \filt(x)$, if $y_k \in M_z$, we cancel $y_k$ by taking $M_z = M_z \triangle M_x = M_z \triangle r\partial x$.
        if $w$ is the pivot of $M_z$, then we have $y_k \in R_p[w]$ by the inductive hypothesis.
        In Naive \redzed, we set $R_p[w] = R_p[w] \triangle r \partial x$ in line 16. 
        This is exactly the set of nonzero entries above $w$ in $M^w=M_z$ after taking $M_z = M_z \triangle r \partial x = (R_p[w] \triangle \{w\}) \triangle r \partial x$.
        Thus, condition 3.~is still satisfied. 
        For condition 2., note that $y_k$ cannot appear in $R_p[z]$ for any birth $z$ that is still alive (not a pivot of any column), so we are done.
    \end{proof}

    With this perspective, Naive \redzed{} can be viewed as a different implementation of compression with exhaustive and retrospective reduction. 
    Our rephrasing allows for a natural further improvement, which we explore in \cref{sec:active}.
    We first describe two minor improvements to this version of \redzed, which are both more computational than mathematical.

    \subsection{Further improvements}
    The first change to Naive \redzed{} is very minor, meant to save memory and make the forthcoming improvements of \cref{sec:active} more natural. 
    The idea is that once a birth element $x$ is such that $R_p[x] = \varnothing$, then $R_p[x]$ will stay empty for the remainder of the algorithm.
    We can therefore delete the key from the dictionary $R_p$, and it will not change any of the computations in the algorithm. 

    The second change comes from the observation that in case 2 of the algorithm, we spend a lot of time looping through $z \in \keys(R_p)$ to check if $y_k \in R_p[z]$, namely line 15 of Algorithm 1.
    To avoid this loop, we maintain a reverse dictionary $R_p^{-1}$, which has as keys basis elements of $A^i_p$, and values $R_p^{-1}[x]=\{y \in \keys(R_p) \ | \ x \in R_p[y] \}$.
    Then, instead of looping through $z \in \keys{R_p}$ such that $y_k \in R_p[z]$, we loop through $z \in R_p^{-1}[y_k]$.
    To maintain $R_p^{-1}$, it starts as empty, and we update it every time a new element $x$ is added.
    In case 1, where $r\partial x=0$, since we set $R_p[x]=\{x\}$, we need to also set $R_p^{-1}[x]=\{x\}$.
    In case 2, we have that $r \partial x=\{y_1,\dots,y_k\}$. 
    In this case, we normally loop though all $z \in \keys(R_{p-1})$ and if $y_k \in R_{p-1}[z]$ take $R_{p-1}[z] = R_{p-1}[z] \triangle \{y_1,\dots,y_k\}$, but instead we loop though $z \in R_{p-1}^{-1}[y_k]$ and for each $z$ take $R_{p-1}[z]=R_{p-1}[z]\triangle \{y_1,\dots,y_k\}$.
    We then need to update $R_p^{-1}[y_i]$ for each $i$. 
    For each $z \in R_p[y_k]$, if beforehand we had that $y_i \in R_p[z]$, now $y_i \notin R_p[z]$, and if beforehand $y_i \notin R_p[z]$, now $y_i \in R_p[z]$.
    Thus to update $R_p^{-1}[y_i]$, we can take $R_p^{-1}[y_i]=R_p^{-1}[y_i] \triangle R_p^{-1}[y_k]$ for each $i$.
    We also then delete $R_p^{-1}[y_k]$ from the dictionary. 
    Full psudeocode with these changes is given below: 

    \begin{algorithm}[H] 
\caption{\redzed}
\begin{algorithmic}[1]
\Statex \textbf{Input 1:} An elementwise filtration given in the form of \texttt{filtration\_enumerator}
\Statex \textbf{Input 2:} A natural number $n$
\Statex \textbf{Output:} The persistence pairs of the filtration
\Statex \hspace*{-\leftmargin}\hrulefill
\State Initialize empty lists $P_0,\dots,P_n$
\State Initialize dictionaries $R_0,\dots ,R_n$, and $R_0^{-1}, \dots, R_n^{-1}$

\For{$i=1$ to length of filtration}

\State $x$, $p$, $\partial x$ = \texttt{filtration\_enumerator}$(i)$
\State Initialize empty set $r\partial x$

\For{$y \in \partial x$}

\If{$y \in \keys(R_{p-1})$}

\State $r\partial x = r\partial x \triangle R_{p- 1}[y] $

\EndIf

\EndFor

\If{$r\partial x = \varnothing$}

\State $R_{p}[x] = \{x\}$ if $p \leq n$

\Else{}

\State Let $r\partial x = \{y_1,\dots ,y_k\}$ with $\filt(y_k) >\filt(y_j)$ for all $1 \leq j \leq k-1$

\For{$z \in R_{p-1}^{-1}[y_k]$}

\State $R_{p-1}[z] = R_{p-1}[z] \triangle r\partial x$

\If{$R_{p-1}[z] = \varnothing$}

\State Delete $R_{p-1}[z]$ from the dictionary

\EndIf

\EndFor

\For{$j=1$ to $k-1$}
\State $R_{p-1}^{-1}[y_j] = R_{p-1}^{-1}[y_j] \triangle R_{p-1}^{-1}[y_k]$
\EndFor

\State Delete $R_{p-1}^{-1}[y_k]$ from the dictionary 
\State Append $(\filt(y_k),i)$ to $P_{p-1}$
\EndIf
\EndFor
\For{$m=0$ to $n$}
\For{$x \in \keys(R_{p}^{-1})$}
\State Append $(\filt(x), \infty)$ to $P_m$
\EndFor
\EndFor
\State \Return $P_0, \dots, P_n$
\end{algorithmic}
\end{algorithm}

We shall refer to this improved algorithm as \redzed.
\section{Active enumeration} \label{sec:active}
While the \redzed{} algorithm is stated for arbitrary filtrations of chain complexes, our primary motivation is to improve computation times for Vietoris--Rips filtrations. 
In this setting, \redzed{} allows for a natural further improvement, which we call \textit{active enumeration}.
In the case of Vietoris--Rips filtrations with active enumeration employed, we call our algorithm \redzedvr. 

When implemented directly, the persistence pairing algorithm suffers one major bottleneck: when computing $n$-th persistent homology, almost all time is spent reducing $(n+1)$-birth columns \cite{silva-morozov-vejdemo-johansson:dualities,bauer:ripser}. 
Since these columns correspond to generators in dimension $n+1$, they are irrelevant to the $n$-th homology computation, and thus in practice one wants a way to avoid reducing them. 
In Ripser \cite{bauer:ripser}, this is done using a combination of birth clearing and cohomology. 
These same tricks are not compatible with \redzed, however, so we require a new method to avoid reducing $(n+1)$-births.
We begin by introducing a bit of terminology.

\begin{definition}
    Let $X=(X,d)$ be a set equipped with a distance function, and $\VR(X)$ the corresponding (refined) Vietoris--Rips filtration. 
    Let $i \geq 0$, and $\sigma$ a simplex such that $\filt(\sigma) \leq i$. 
    We say that $\sigma$ is \textit{active} at time $i$ if $r^i \sigma \neq 0$, otherwise we say $\sigma$ is \textit{inactive}.
\end{definition}

Let us make a few comments about this definition.
First, we note that all death simplices are automatically inactive.
For birth simplices, the two classifications: alive--vs--dead and active--vs--inactive do not coincide.
All inactive simplices are dead and all alive simplices are automatically active, but a simplex can be dead and active, when its homology class was merged with nonzero homology class of another (still alive) simplex.
At a given index $i$, the active simplices in dimension $p$ can be read off from $\keys(R_p)$. 
Finally, note that all birth simplices start off active, and once they are inactive, they remain inactive. 

Let us now explain how the active--vs--inactive distinction could be used to ``weed out'' some of the $(n+1)$-birth simplices.
Proceeding naively, after adding an $n$-simplex $\sigma$, we would look for the next simplices in the filtration, which must be those $(n+1)$-simplices $\tau$ that contain $\sigma$ as their youngest face.
As indicated above, we wish to avoid enumerating those that are births.
If an $(n+1)$-simplex $\tau$ has only inactive faces, then $r \partial \tau=0$, so $\tau$ is an $(n+1)$-birth simplex and we do not need to include it in the filtration.
Put differently, we only need to enumerate the $(n+1)$-simplices with at least one active face.
\iffalse
After adding an $n$-simplex $\sigma$, we would normally search for all $(n+1)$-simplices that have $\sigma$ as their youngest face and process them. 
During this process, it is possible that $\sigma$ becomes inactive. 
If this is the case, then certain $n+1$ simplices that are enumerated may have all their faces be inactive (and thus be a birth). 
We want to avoid enumerating such simplices.
\fi

In searching for such $(n+1)$-simplices, no checks are required after a death $n$-simplex:

\begin{proposition}
    Let $(X,d)$ be a set equipped with a distance function. 
    Suppose a simplex $\sigma$ of dimension $n$ is added at time $i$ in the (refined) Vietoris--Rips filtration of $(X,d)$ is a death simplex. 
    Then $\sigma$ does not appear as the youngest face of any $(n+1)$-dimensional simplices. 
\end{proposition}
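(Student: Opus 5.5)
The plan is to argue by contradiction, and the argument barely uses the Vietoris--Rips structure. It only uses that the filtration is a filtration of simplicial complexes, so that faces of a simplex appear before it, together with the chain-map property of $r^{i-1}$ from \cref{prop:reduction}.

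Suppose $\sigma$, of dimension $n$, is added at time $i$ and is a death simplex. Suppose also, for contradiction, that there is an $(n+1)$-simplex $\tau$ having $\sigma$ as its youngest face. Then every other face $\sigma_j$ of $\tau$ satisfies $\filt(\sigma_j) < \filt(\sigma) = i$. Hence the chain $c = \partial\tau - \sigma = \sum_{\sigma_j \neq \sigma} \sigma_j$ lies in $C^{i-1}_n$. Over $\bbF 2$ no signs need tracking; over a general field I would write $\partial\tau = \pm\sigma + c$ instead.

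The key step is to compute $\partial\sigma$ using $\partial\partial = 0$. Since $\partial\partial\tau = 0$, we get $\partial\sigma = \pm\,\partial c$, and this chain lies in $C^{i-1}_{n-1}$. Because $r^{i-1}$ is a chain map into a complex with zero differentials,
\[ r^{i-1}_{n-1}\partial\sigma = \pm\, r^{i-1}_{n-1}\partial c = \pm\,\partial\, r^{i-1}_n(c) = 0 . \]
By the case analysis in the proof of \cref{prop:reduction}, or equivalently by \cref{lemma:R_p correct} (the algorithm computes $r\partial\sigma = 0$), this puts $\sigma$ in Case 1. So $\sigma$ is a birth simplex, contradicting the assumption that it is a death simplex.

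I do not expect a serious obstacle; the main care points are bookkeeping. First, I will make explicit that "youngest face" means all other faces precede $\sigma$ in the refined order, so that $c \in C^{i-1}$. Second, I will note that the birth/death dichotomy for the element added at time $i$ is exactly the dichotomy $r^{i-1}\partial x = 0$ versus $\neq 0$ from \cref{prop:reduction}. Finally, the degenerate case $n = 0$ needs a remark: a vertex has $\partial\sigma = 0$ and is always a birth, so the hypothesis is vacuous there.
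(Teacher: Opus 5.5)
Your proof is correct, and it takes a different route from the paper's. The paper argues through Ripser's apparent pairs. It picks $\tau$ as the eldest coface having $\sigma$ as youngest face and asserts that $(\sigma,\tau)$ is an apparent pair. That makes it a persistence pair with $\sigma$ as the birth, contradicting that $\sigma$ is a death.

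You instead note that $\partial\sigma=\mp\,\partial c$ with $c\in C^{i-1}_n$, so $\partial\sigma$ is already a boundary before $\sigma$ arrives. The chain-map property of $r^{i-1}$ then puts $\sigma$ in Case~1 of \cref{prop:reduction}. You could even skip $r^{i-1}$: $\partial\sigma\in\operatorname{im}\partial^{i-1}_n$ says directly that the rank of $\partial_n$ does not increase, which is the definition of a birth.

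Your argument is self-contained and needs no facts about matrix reduction. It works for \emph{any} coface in which $\sigma$ is youngest, and it applies to every simplexwise filtration of a simplicial complex, not just Vietoris--Rips.

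It also avoids a soft spot in the paper's one-liner. An apparent pair requires $\tau$ to be the oldest cofacet of $\sigma$ among \emph{all} its cofacets. The eldest cofacet in which $\sigma$ is youngest need not be that one. For example, take vertex order $a<b<w_1<w_2$, with $d(a,b)=d(a,w_1)=1$ and all other distances $\tfrac12$. The oldest cofacet of $ab$ is $abw_1$, whose youngest facet is $aw_1$, while $ab$ is the youngest facet only of $abw_2$. So $(ab,abw_2)$ is not apparent.

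The proposition itself is unaffected, since your argument proves it. But the paper's proof needs a small repair that yours does not. What the paper's version buys is brevity, given Ripser's apparent-pair theory.
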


\begin{proof}
    Suppose $\sigma$ is the youngest face of $\tau$, and $\tau$ is the eldest such coface of $\sigma$ with this property. 
    Then $(\sigma,\tau)$ form an apparent pair in the sense of \cite{bauer:ripser}, which contradicts $\sigma$ being a death simplex.
\end{proof}
Thus, after each birth, simplex $\sigma$ of dimension $n$, we search for $(n+1)$-simplices that have at least one active face, and have $\sigma$ as their youngest face.
We do this in two steps: 
\begin{itemize}
    \item \textbf{Step 1:} search for all $(n+1)$-simplices that have $\sigma$ as their youngest face, along with another active face $\tau$; and
    \item \textbf{Step 2:} if $\sigma$ is still active after the conclusion of Step 1, we search for one extra $(n+1)$-simplex that has $\sigma$ as its face, and no other active faces.
\end{itemize}

\textbf{Step 1:} We loop through $\tau \neq \sigma \in \keys(R_n)$ and check for each $\tau$, if there is an $(n+1)$-simplex with $\sigma$ as its youngest face, that has both $\sigma$ and $\tau$ as faces.
To do this, we first check that $\sigma$ and $\tau$ share a face, which happens if $|\sigma \cap \tau| = n$. 
If so, there is an $(n+1)$-simplex that has $\sigma$ and $\tau$ as faces, but we must ensure that $\sigma$ is the youngest face of the new simplex.
Let $v = \sigma \setminus \tau$ and $w = \tau \setminus \sigma$. 
We first check if $d(v,w) \leq \filt(\sigma)$, which if so, ensures that $\filt(\sigma \cup \tau) = \filt(\sigma)$, which must be the case for $\sigma$ to be the youngest face. 
While this check is not necessary, it is faster than checking directly that $\sigma$ is the youngest face, so it weeds out cases where we can avoid checking.
If $d(v,w) \leq \filt(\sigma)$, we then directly check that $\sigma$ is the youngest face, and if so, we then process the simplex $\sigma \cup \tau$. 
To avoid enumerating a simplex twice, we temporarily store a list of the $w$'s that we have checked.
Once we have looped through all $\tau \in \keys(R_n)$, we have enumerated all $(n+1)$-simplices that have both $\sigma$ and at least one other active face in their boundary.
If $\sigma$ is no longer active, then all other $(n+1)$-simplices will have no active faces, so we are done. 

\textbf{Step 2:} If $\sigma$ is still active, we only need to search for one other $(n+1)$-simplex.
To do this, we loop through all vertices that are not in the stored list of checked $w$'s, and see if they can be added to $\sigma$ to form an $(n+1)$-simplex.
Once one is found, call it $\tau$, we process it and immediately stop searching for more.
This is because since $\sigma$ is the only active face, we will have that $\overline{\partial}\tau = r\sigma$ (note that we are omitting the index from $r$ here). 
Thus, when we quotient $\overline{\partial}\tau =0$, we will get that $r\sigma=0$, so $\sigma$ will become inactive. 
Thus, all future $(n+1)$-simplices created by $\sigma$ will have no active faces.

\begin{example}
    We now revisit \cref{ex:redzed} with the addition of active enumeration. 
    Since we are concerned with first homology, when a birth $1$-simplex is added, we check for $2$-simplices with active faces using step 1 and then step 2.
    The following table shows how all $2$-simplices necessary for the computation of the first homology group are found via active enumeration.
    The columns contain in order: the $1$-simplex $x$ added, the value of $r \partial x$, whether active enumeration applies, the values stored in $R_1$, and the $2$-simplices found in each step.
    \begin{center}
    \begin{tabular}{|c!{\vrule width 1.2pt}c|c|c|c|c|}
    \hline
    Simplex $x$ & $r \partial x$ & A.E.? & $R_1$ & Step 1 found & Step 2 found\\
    \noalign{\hrule height 1.2pt}

    $ab$ & $a+b$ & No &  &  & \\
    \hline 

    $bc$ & $a+c$ & No &  &  & \\
    \hline 

    $cd$ & $a+d$ & No &  &  & \\
    \hline 

    $ad$ & $0$ & Yes & 
    \begin{tabular}{@{}c@{}}
    $ad \to ad$
    \end{tabular}
    &  & \\
    \hline 

    $ac$ & $0$ & Yes & 
    \begin{tabular}{@{}c@{}}
    $ad \to ad$ \\
    $ac \to ac$ \\
    \end{tabular}
    & $acd$ & $abc$ \\
    \hline 

    $bd$ & $0$ & Yes & 
    \begin{tabular}{@{}c@{}}
    $bd \to bd$
    \end{tabular}
    &  & $bcd$\\
    \hline

    \end{tabular}
    \end{center}
    Simplices $ab$, $bc$, and $cd$ are death simplices, and therefore not subject to active enumeration.
    When $ad$ is added, since it is a birth, we search for $2$-simplices created by $ad$. 
    Since there are no active simplices that are not $ad$, we skip step $1$. 
    In step $2$, no $w$ works to form a 2-simplex, so none are found. 
    After $ac$ is added, we again search for $2$-simplices.
    This time $ad$ is an active face, so we check if $ac \cup ad = acd$ forms a valid 2-simplex created by $ac$, which it does, so it gets processed. 
    After $acd$ gets processed, we get that $R_1[ac]=R_1[ad]=ad$, so $ac$ is still active.
    In this case we proced to step 2, where we find $abc$, which kills $ad$ and makes both $ac$ and $ad$ inactive. 
    When $bd$ gets added, we again skip step 1 since there are no other active 1-simplices. 
    In step 2, we find $bcd$, which kills $bd$, and immediately exit the process. 
    In particular, we avoid enumerating $abd$.
    Depending on the order we searched, we could have found $abd$ first, and avoided enumerating $bcd$.   
\end{example}

Note that this does not guarantee that all $n+1$ births will not be enumerated, as it is possible for an $(n+1)$-simplex to have multiple active faces and still be a birth. 
For example, if an $(n+1)$-simplex $\omega$ has only active simplices $\sigma$ and $\tau$, and $r \sigma = r \tau$, then $r \partial \omega = r \sigma + r \tau =0$. 
This does, however, avoid enumerating \emph{almost all} $(n+1)$-births. 
For example, for a dataset of size 400, for $n=1$ there are generally 10,507,399 dimension 2 birth simplices, and in the worst case we tested, we enumerate 296,911 of them, meaning we avoid enumerating over $97\%$ of all $(n+1)$-dimensional birth simplices. 
This worst case generally happens when all points are concentrated around one homology feature. 
For example, this data was taken to be 400 points about a unit circle, with slight noise added in. 
In other cases, the number is generally much better than $97\%$. 
In fact, in all other cases we tested, we avoid enumerating more than $99.95\%$ of the $n+1$ birth simplices, and in some cases, this number is actually $100\%$. 
To see the exact types of datasets we tested this on, see \cref{sec:experiments}.

\iffalse
While this improvement, as stated, is specific to Vietoris--Rips filtrations, a similar idea could hold in any case where it is easy to enumerate only future elements that have at least one active face.
\fi

\subsection{Matrix interpretation}
Similar to \redzed, we can translate these ideas into the matrix-theoretic language. 
Specifically, we translate what it means for a simplex to be active, as well as why simplices with all inactive faces will get reduced to zero.

Recall from the proof of \cref{prop:matrix} that for a birth simplex $\sigma$, if $\sigma \notin R_p[\sigma]$, then $R_p[\sigma]$ is equal to the set of nonzero entries above $\sigma$ in $M^\sigma$, the column with $\sigma$ as its pivot.
If this set is empty, then $R_p[\sigma]=\varnothing$, so we remove it from the dictionary, and $\sigma$ becomes inactive.

Thus, after the $i$-th step of the filtration, the active simplices are exactly the simplices that are either current unpaired birth simplices, or birth simplices $\sigma$ that have been killed, but have nonzero entries above $\sigma$ in $M^\sigma$. 
Once the column $M^\sigma$ has been maximally reduced, meaning the only nonzero entry is $\sigma$, then $\sigma$ becomes inactive. 

For a matrix interpretation of why $(n+1)$-simplices with inactive faces need not be included in the filtration, suppose an $(n+1)$-simplex $\tau$ is such that at the time it is added, all of its faces are inactive simplices. 
Let $M_\tau$ be the column corresponding to $\tau$.
Since all faces are inactive, this means that all the faces of $\tau$ are either death simplices or ones that appear as a pivot in a column with no other nonzero entries. 
If this is the case, we know the $M_\tau$ will get reduced to zero. 
This is because any face that is a death simplex will be cleared by the compression algorithm, so that will already be zero in $M_\tau$. For any face $\sigma$ that is an inactive birth simplex, we can immediately cancel it by adding $M^\sigma$ to $M_\tau$.
Since the rest of $M^\sigma$ is zero, this does not change the rest of $M_\tau$.
Once we do this for all of its faces that are inactive birth simplices, $M_\tau$ will be reduced to zero. 

\begin{example}
Consider the matrix below, and suppose we have reduced the matrix up until the $d$-th column.
    \begin{center}
\begin{tikzpicture}[x=0.8cm, y=0.5cm]

% entries
\node at (0, 0) {};
\node at (1, 0) {$a$};
\node at (2, 0) {$b$};
\node at (3, 0) {$c$};
\node at (4, 0) {$d$};
\node at (5, 0) {$e$};

\node at (0,-1) {$f$};
\node at (1,-1) {$0$};
\node at (2,-1) {$0$};
\node at (3,-1) {$\mathbf{1}$};
\node at (4,-1) {$0$};
\node at (5,-1) {$0$};

\node at (0,-2) {$g$};
\node at (1,-2) {$\mathbf{1}$};
\node at (2,-2) {$0$};
\node at (3,-2) {$0$};
\node at (4,-2) {$\mathbf{1}$};
\node at (5,-2) {$\mathbf{1}$};

\node at (0,-3) {$h$};
\node at (1,-3) {$0$};
\node at (2,-3) {$0$};
\node at (3,-3) {$\mathbf{1}$};
\node at (4,-3) {$0$};
\node at (5,-3) {$0$};

\node at (0,-4) {$i$};
\node at (1,-4) {$\mathbf{1}$};
\node at (2,-4) {$0$};
\node at (3,-4) {$0$};
\node at (4,-4) {$\mathbf{1}$};
\node at (5,-4) {$\mathbf{1}$};

\node at (0,-5) {$j$};
\node at (1,-5) {$0$};
\node at (2,-5) {$\mathbf{1}$};
\node at (3,-5) {$0$};
\node at (4,-5) {$\mathbf{1}$};
\node at (5,-5) {$0$};

\node at (0,-6) {$k$};
\node at (1,-6) {$0$};
\node at (2,-6) {$0$};
\node at (3,-6) {$0$};
\node at (4,-6) {$0$};
\node at (5,-6) {$\mathbf{1}$};

% table lines
\draw (-0.5,-0.5) -- (5.5,-0.5);
\draw (0.5,0.5) -- (0.5,-6.5);

% box highlighting column d
\draw[thick, rounded corners=1pt]
    (3.5,0.5) rectangle (4.5,-6.5);

% strikethrough row g
\draw[thick]
    (-0.5,-2) -- (5.5,-2);

\end{tikzpicture}
\end{center}
The crossed out row corresponds to that row being compressed, so it is ignored.
The only active simplex that is not alive is $h$, since it appears as the pivot of column $c$, which has another nonzero entry in $f$. 
The simplices $k$ and $f$ are also active, since they correspond to births not yet killed.
We know this because their rows have not been compressed and they are not pivots of any column.
Note that $i$ is not active, since the other nonzero entry in column $a$ is zeroed by compression. 
The column $d$ only has entries in rows $i$, $j$, and $g$. 
Since $g$ gets compressed the only remaining entries are in $i$ and $j$, which are both inactive. 
Thus $d$ would get reduced to zero simply by adding columns $a$ and $b$. 
Since $d$ has no active faces, however, the active enumeration shortcut would avoid enumerating and processing column $d$ altogether. 
\end{example}

This highlights the fact that the combination of exhaustive and retrospective reduction, along with compression, is necessary for the active enumeration shortcut.
Without both exhaustive and retrospective reduction, we wouldn't be reducing all columns as much as possible, and thus wouldn't be able to track active simplices.
Without compression, very few simplices would get flagged as inactive.
Since death simplices can never be pivots, they are impossible to directly reduce.
Thus simplices with at least one death simplex in their boundary may never get flagged as inactive without compression.

\section{Implementation} \label{sec:implementation}
%In this section, we go over the implementation of \redzedvr{}. 
Our implementation of \redzedvr{} was first written in Julia, and then translated to C++, with both versions available at \url{https://github.com/nkershaw01/RedZeD}.
Many parts of the implementation are inspired by Ripser, including the maintenance of neighborhoods for each point, the usage of the combinatorial numbering system (although we do not make full use of it as in Ripser), and the enclosing radius. 
Bauer notes in \cite{bauer:ripser} that the enclosing-radius optimization was first proposed by Henselman-Petrusek \cite{henselman-petrusek2017matroids}.

\paragraph{Input and output.} 
The inputs to \redzedvr{} are a Float32 distance matrix $M$, and a homological degree $n$. 
The matrix can be lower triangular, or square, in which case it will be converted to lower triangular.
%The entries of the matrix are Float32.
An optional input is the threshold, in which the filtration is enumerated only up until a specified radius.
If no threshold is given, it defaults to the \textit{enclosing radius}, defined as $$R_e=\min_{x \in X} \max_{y \in X} d(x,y)$$
after which the simplicial complex is a cone, so all homology in non-negative degrees will be trivial. 
If a threshold is specified, it uses $\min(\text{threshold},R_e)$.
In C++, the output is an array $[P_0,P_1,\dots,P_n]$ of persistence pairs, where $P_i$ is an array of pairs in degree $i$. 
In Julia, the output is a dictionary with keys a dimension $d$ whose value is $P_d$. 

Another optional input is \texttt{simplex\_pairs}, which defaults to false and, when set to \texttt{true}, returns simplex pairs alongside the intervals. 
This has the same format as the intervals, such that $\texttt{spx\_pairs}[d][k]$ is the corresponding birth-death simplex pair to the interval $\texttt{intervals}[d][k]$.
Simplices are stored as vertices, numbered from zero in matrix order.
Birth-death pairs are tuples $(b,d)$, where $b$ is the birth simplex and $d$ is the death simplex, with $d$ an empty tuple for infinite intervals.

\paragraph{Filtration enumeration.} Rather than enumerate the filtration in strict order, we enumerate and process dimension by dimension, in increasing order. 
The exception is dimension $n$, which, due to active enumeration, is processed interleaved with the $(n+1)$-simplices active enumeration finds. 

For dimension 1, we build and sort all edges at once, and then process them sequentially.
We sort edges via iterated bucket sort, first distributing them into $\min(E/4,2^{16})$ buckets based on their radius, where $E$ is the total number of edges in the filtration, ignoring the threshold. 
Both numbers were chosen empirically, with a cap of $2^{16}$ used for memory, which has the added benefit of speed due to allowing variables such as the counts in each bucket to stay in cache.
If a bucket is of size $>32$, we again use bucket sort on it, dividing it into ($\#$ of edges in the bucket)$/4$ further buckets. 
This process is iterated, sorting buckets directly when their size reaches $\leq 32$, when the depth reaches 5, or when all edges in a bucket are the same radius.
When directly sorting, we use insertion sort when the size $\leq 32$, and comparison sort otherwise. 
Radius ties are broken via the combinatorial index.
The function $\texttt{build\_edges}$ implements this sorting procedure.

The natural approach for dimensions $\geq 2$ is to enumerate using the same method; however, this has the drawback of needing to simultaneously store all the simplices in that dimension, which becomes the dominant memory cost. 
Instead, we generate $d$-dimensional simplices for $d \geq 2$ one edge at a time: given an edge $e$, we find all $d$-dimensional simplices with max edge $e$ using a method similar to Ripser's coface enumeration.
For each dimension $d$, we iterate the edges in order while maintaining a bitset for each vertex $v$, with 1 in position $w$ if $v \sim w$.
Enumerating $ d$-dimensional simplices with max edge $e$ becomes a problem of iteratively intersecting neighbourhoods, which we can do in decreasing vertex order so that generated simplices are done in order of their combinatorial index (which is colexicographical on vertices), which avoids sorting them once they are generated.
Simplices are processed immediately after they are found, meaning we do not have to store them. 
This was observed to be faster, but more importantly, around 16 times more memory efficient for $H_2$ and higher as opposed to repeating the strategy for edges.
This is done in $\texttt{generate\_stream}$. 

\paragraph{Simplex storage.} 
Once a simplex is generated and processed, it is referenced using its \textit{rank}, its position in the filtration for its dimension, rather than its combinatorial index as in Ripser. 
The reason for this is that whenever we calculate $r \partial \sigma \neq 0$, we need to find the youngest element. 
If we stored simplices via their combinatorial index, this would require decoding the index to vertices, and then performing $d(d+1)/2$ matrix reads to compute the diameter for each element in $r \partial \sigma$.
Using rank allows us to choose the largest element in $r \partial \sigma$, which is already sorted (see the paragraph on the symmetric difference). 
For each live simplex, we also store its diameter, used for reporting persistence pairs.
Storing the diameters would avoid the $d(d+1)/2$ matrix reads if the combinatorial index were to be used; however, it would still require accessing memory for each simplex in $r \partial \sigma$ and finding the maximum diameter. 

\paragraph{Symmetric difference.} 
To compute symmetric differences, one of the primary operations of the algorithm, we store sets as sorted arrays.
To take the symmetric difference of two sets, start at position 1 of each array, advance whichever entry is smaller, and add it to the output list. 
If at any point the entries are equal, advance both and add neither to the output. 
When one array is exhausted, append the remaining entries to the output. 
This method has the added benefit of making pivot selection free, done by simply reading off the last element in the array. 

\paragraph{Dictionaries.} 
Since a large portion of the algorithm is spent accessing and editing dictionaries, how they are implemented is important. 
The main structure we use for the $R$ and $R^{-1}$ dictionaries is \texttt{SetDict}, which is built on the structure \texttt{FastMap}, an open addressing hash table with linear probing. 
It uses backwards-shift deletion instead of tombstones, which would otherwise accumulate as simplices are frequently added and deleted as keys, increasing both memory use and running time.
The input to \texttt{FastMap} is the rank of a simplex, and the output is a pool slot, or $0$ if the rank is not a key. 
The pool slot corresponds to a pool in \texttt{SetDict}, which stores arrays \texttt{sets}, \texttt{sing}, \texttt{sdiam}, and \texttt{svert}. The array \texttt{sets} stores the $R$-values (or $R^{-1}$ values), $\texttt{sdiam}$ stores simplex diameters for $R^{-1}$ (used to return a persistence pair), and $\texttt{svert}$ stores simplex vertices for $R$.
The array \texttt{sing} is used for $R$, and stores $\texttt{sing}[\text{slot}] = 0$ if $\texttt{sets}[\text{slot}]$ is not a singleton, and $\texttt{sing}[\text{slot}] = v$ if $\texttt{sets}[\text{slot}]=\{v\}$. 
This is used for an active enumeration shortcut, described in the eponymous paragraph.

Since faces arrive as vertex lists and not simplex ranks, we need a way to convert a list of vertices to a pool slot. 
This is needed for all faces up to dimension $n-1$, as dimension $n$ faces are handled separately by active enumeration. 
To do this, we use an additional structure \texttt{FaceSlot} to map into the $R$-pool slots. 
We first convert the vertex set to its combinatorial index and pass it to \texttt{FaceSlot}, which uses either a dense or a sparse method to return the $R$-pool slot.
The dense method stores an array \texttt{dense}, which is of size the number of possible $d$-simplices for dimension $d$, and is such that $\texttt{dense}[i]$ is the slot for the simplex with combinatorial index $i$, or $0$ if it is not a key of $R_d$. 
The sparse method uses \texttt{FastMap} to do the same thing.
The reason why a dense method can work here is that this only needs to be stored up to dimension $n-1$, so there will be far fewer potential keys than $R$, which needs to be used up to dimension $n$. 
We switch from dense to sparse when there are more than $2^{24}$ possible keys, which is mainly useful for large datasets called with a low threshold. 

\paragraph{Active enumeration.}
Recall that active enumeration is done in two steps: step 1 searches for $(n+1)$-simplices $\tau$ that have $\sigma$ as a face and additionally at least one other active face, and step 2 tries to find one additional $\tau$ not seen by step 1. 
Step 1 must first find all active $n$-simplices that share a face with $\sigma$. 
The most obvious approach is to loop through keys of $R_n$, and for every key, check if it shares a face with $\sigma$. 
Using this method is slow, however, because there are generally a large number of active simplices and the face-sharing check fails the vast majority of the time. 
Instead, we store necessary additional information in \texttt{ActiveReg}. 
It stores is a \texttt{FaceSlot} table that takes in a combinatorial index of an $(n-1)$-simplex and points at an \texttt{ActiveReg} slot. 
The slot then gives access to the ranks of all active $n$-simplices containing that face, the extra vertex in the active $n$-simplex, and its $R$-pool slot. 

Thus, given a birth $n$-simplex $\sigma$, we first compute its faces and use \texttt{ActiveReg} to list all active simplices containing each face.
For each active $\tau$ sharing a face with $\sigma$, let $w$ be the extra vertex in $\tau$ and $v$ the extra vertex in $\sigma$. 
We first check that $d(v,w)\leq \filt(\sigma)$ using a similar bitset to that used for filtration enumeration, and if so add $w$ to a list of candidate vertices if it wasn't already present. 
We also add the slot of $\tau$ to a list \texttt{seedlists}$[w]$, so that by the end of this process \texttt{seedlists}$[w]$ stores the pool slots of all active faces in $\sigma \cup \{w\}$ other than $\sigma$ itself.
This process is done by \texttt{collect\_candidates}.

After all candidate vertices have been found, we loop through the list to check whether $\sigma$ is the youngest face of each corresponding coface.
Prior to this check, which requires computing the combinatorial index of each face and potentially checking several diameters, we implement the shortcut mentioned in the paragraph on dictionaries. 
Although active enumeration avoids enumerating most $(n+1)$-births, some are still enumerated.
Among the births that do get processed, the vast majority involve cancelling singletons, as when $\sigma \cup \{w\}$ has active faces $\sigma$ and $\tau$ with $R[\sigma] = R[\tau] = \{\omega\}$.
Recall that \texttt{SetDict} stores an array \texttt{sing}, such that \texttt{sing}[slot] records whether or not \texttt{sets}[slot] is a singleton, and if so stores the single vertex. 
This allows us to quickly check if a candidate would fall under this case, and if so, immediately skip it. 
We perform this check before checking if $\sigma$ is the youngest face of $\sigma \cup \{w\}$, because it is by far the faster check. 
An alternate shortcut would be to simply compute $r \partial (\sigma \cup \{w\})$ beforehand; however, this this was slower in our experiments. 
This is because the singleton case catches almost all potential births anyway, and is cheaper because it avoids taking symmetric differences and requires access to an array with a smaller memory footprint. 

Step 2 of active enumeration is implemented by \texttt{has\_leftover\_coface}, which intersects the neighbourhoods of all vertices $v \in \sigma$ and removes the vertices seen in step 1.
For every remaining vertex $w$, it checks if $\sigma$ is the youngest face of $\sigma \cup \{w\}$ and if so returns \texttt{true} immediately. 
If no coface is found, it returns \texttt{false}. 

\section{Experiments} \label{sec:experiments}
As stated, the primary motivation for the development of \redzed{} is to improve computation times and memory usage for Vietoris--Rips persistent homology, for which we now have \redzedvr. 
To our knowledge, the high water mark in this enterprise belongs to the algorithm Ripser \cite{bauer:ripser}.
In this section, we compare computation times and memory with Ripser for various datasets.

We compare the C++ version of \redzedvr{} with the C++ implementation of Ripser, version 1.2.1. 
All experiments were done on a laptop with 64 GB of RAM and an Intel i9-13900HX processor.
For a fair comparison, since \redzedvr{} takes input a distance matrix that is also the format we input into Ripser.

We begin by testing on curated datasets, where we can control the characteristics in order to isolate different factors.
We also test on some of the common benchmark datasets that are used in \cite{bauer:ripser}.
We generate datasets in four ways with different characteristics. 
The first is $\texttt{random\_euclidean}(n,\dim)$. 
This simply generates $n$ random points inside $[0,1]^{\dim}$. 
Since these points are random in a cube in $\mathbb{R}^{\dim}$, there is very little topological signal. 
The second method is $\texttt{noisy\_sphere}(n,\dim,\sigma)$. 
This randomly generates $n$ points in $\mathbb{R}^{\dim+1}$ about the unit sphere $S^{\dim}$, and adds noise according to the standard deviation $\sigma$. 
The third method is $\texttt{grid\_spheres}(n,m,\dim,\sigma)$. 
This creates  $n$ unit spheres $S^{\dim}$ of $m$ points each in $\mathbb{R}^{\dim +1}$, and stacks them in a $\dim+1$-dimensional grid. 
Again, it adds noise according to the standard deviation $\sigma$. 
Finally, $\texttt{random\_distance}(n)$ just returns a random $n \times n$ distance matrix, where the diagonal is zero, and it is symmetric. 
Each entry is chosen at random to be between 0 and 1.
While this has no true topological signal, since the dataset is only a semimetric, it creates a large amount of signal due to noise.

We test $\texttt{random\_distance}(n)$ in homology degrees 1, 2, and 3. 
For $H_1$, we test on the range $(50,450)$, for $H_2$ we test on the range $(40,175)$ and for $H_3$ we use $(30,125)$. 
We also test $\texttt{noisy\_sphere}(n,\dim,\sigma)$ in degrees 1, 2, and 3. 
For $H_1$, we set $\dim=1$ and let $n$ vary in $(200,3800)$. 
For $H_2$ we set $\dim=2$ and let $n$ vary in $(80,900)$, and for $H_3$ we set $\dim=3$ and let $n$ vary in $(50,430)$. 
For $\texttt{grid\_spheres}(n,m,\dim,\sigma)$ we only test in degrees 1 and 2, since in $H_3$ we would not be able to increase $n$ to any substantial number. 
For $H_1$ we set $m=20$, $\dim=1$, and vary $n$ between 5 and 250 (100 and 5000 total points). 
For $H_2$ we set $m=80$, $\dim=2$, and vary $n$ between 2 and 21 (160 and 1680 total points). 
For $\texttt{random\_euclidean}(n,\dim)$, we fix $\dim=2$ and test $H_1$ for $n$ varying in $(400,12000)$. 
We also fix $\dim=10$ and test degrees 1, 2 and 3. 
For $H_1$ we let $n$ vary in $(200,6000)$, for $H_2$ we use $(80,1800)$, and for $H_3$ we use $(40,330)$. 
For each experiment, we report the runtime and peak memory usage (peak RSS).
The results can be seen below in \cref{fig:time} and \cref{fig:mem} (note the log scales on both the x and y axes).

\begin{figure}[H]
    \centering
    \includegraphics[width=\textwidth]{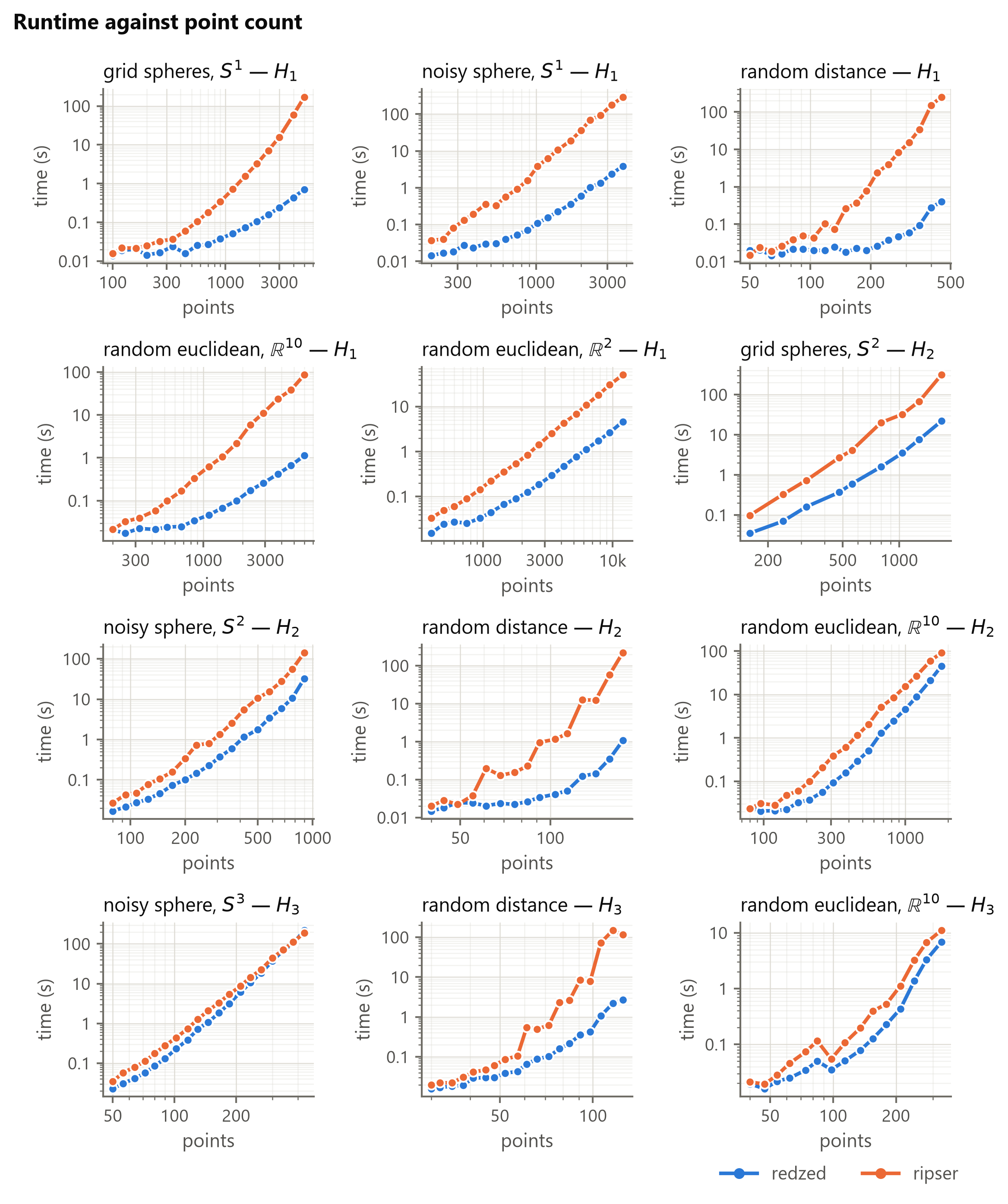}
    \caption{Ripser vs \redzedvr{} time and memory for various datasets}
    \label{fig:time}
\end{figure}

\begin{figure}[H]
    \centering
    \includegraphics[width=\textwidth]{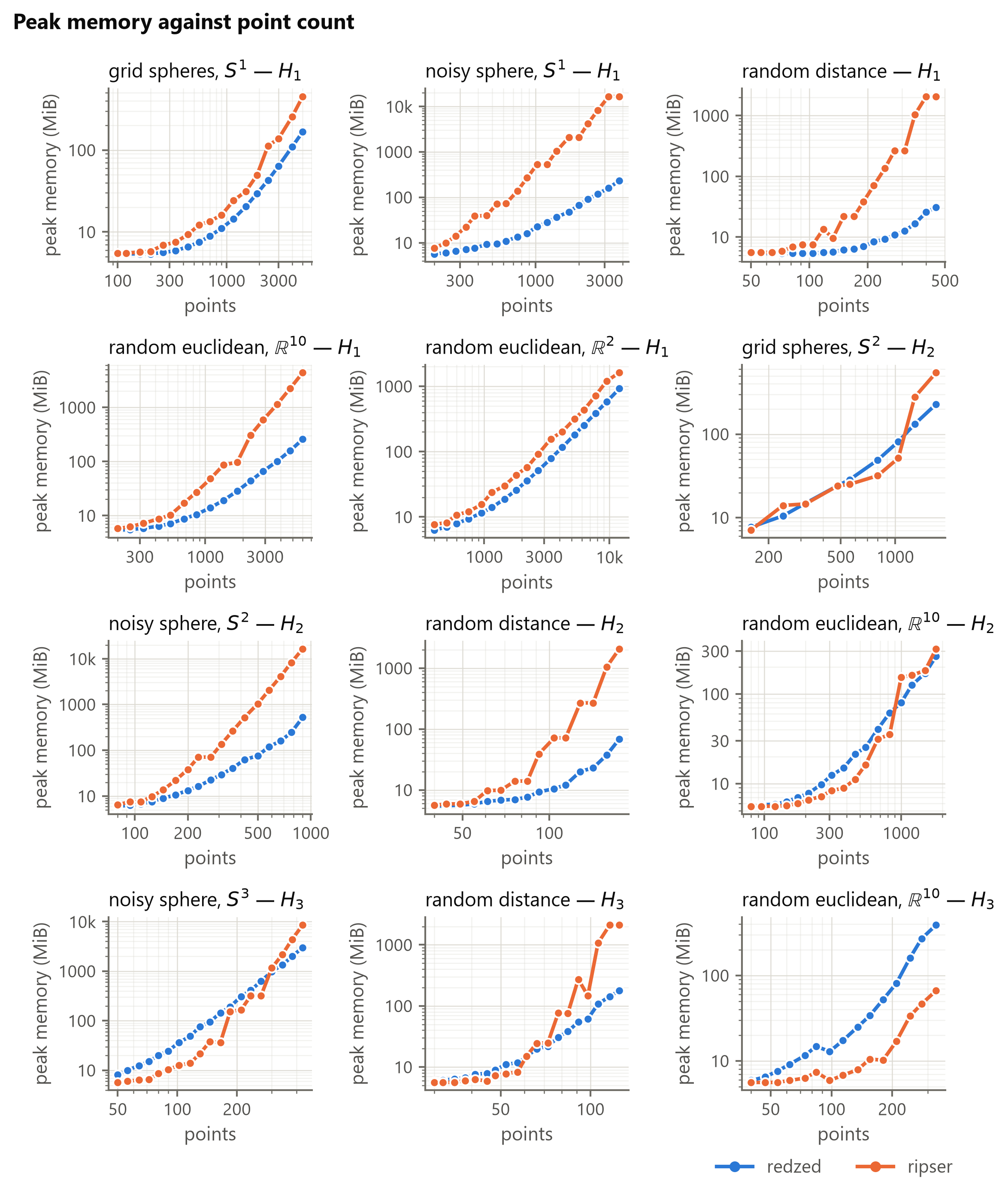}
    \caption{Ripser vs \redzedvr{} time for various datasets}
    \label{fig:mem}
\end{figure}

We also run on several of the fixed datasets tested in \cite{bauer:ripser}.
Namely, we test on \texttt{sphere\_3} in degrees 1, 2 and 3, \texttt{fract\_r} in degrees 1 and 2, \texttt{dragon} in degree 1, \texttt{o3\_1024} in degree 1 and degree 3 with a threshold of 1.8, and finally \texttt{o3\_4096} in degree 1 and in degree 3 with a threshold of 1.4. 
The results of those tests, along with the results of the largest runs for the above datasets, can be found below.
\begin{table}[htbp] \label{table:timemem}
  \centering
  \setlength{\tabcolsep}{4pt}
  \begin{tabular}{llrrrrrrr}
    \toprule
    & & & \multicolumn{3}{c}{Time (s)} & \multicolumn{3}{c}{Peak memory (MiB)} \\
    \cmidrule(lr){4-6} \cmidrule(lr){7-9}
    Dataset & & pts & \redzedvr{} & Ripser & Speedup & \redzedvr{} & Ripser & Ratio \\
    \midrule
    grid spheres, $S^{1}$ & $H_1$ & 5,000 & 0.706 & 171 & 242 & 167 & 452 & 2.71 \\
    noisy sphere, $S^{1}$ & $H_1$ & 3,800 & 3.87 & 292 & 75.4 & 231 & 16,500 & 71.4 \\
    random distance & $H_1$ & 450 & 0.404 & 251 & 621 & 30.9 & 2,070 & 66.9 \\
    random euclidean, $\mathbb{R}^{10}$ & $H_1$ & 6,000 & 1.12 & 86.5 & 76.9 & 258 & 4,400 & 17.1 \\
    random euclidean, $\mathbb{R}^{2}$ & $H_1$ & 12,000 & 4.62 & 50.7 & 11.0 & 921 & 1,620 & 1.76 \\
    grid spheres, $S^{2}$ & $H_2$ & 1,680 & 21.9 & 313 & 14.3 & 228 & 546 & 2.39 \\
    noisy sphere, $S^{2}$ & $H_2$ & 900 & 32.5 & 143 & 4.40 & 522 & 16,400 & 31.4 \\
    random distance & $H_2$ & 175 & 1.07 & 221 & 207 & 68.9 & 2,080 & 30.3 \\
    random euclidean, $\mathbb{R}^{10}$ & $H_2$ & 1,800 & 45.4 & 92.7 & 2.04 & 264 & 319 & 1.21 \\
    noisy sphere, $S^{3}$ & $H_3$ & 430 & 219 & 189 & 0.866 & 3,000 & 8,540 & 2.84 \\
    random distance & $H_3$ & 125 & 2.67 & 116 & 43.6 & 178 & 2,110 & 11.8 \\
    random euclidean, $\mathbb{R}^{10}$ & $H_3$ & 330 & 6.87 & 11.0 & 1.61 & 390 & 66.5 & 0.171 \\
    \midrule
    \texttt{sphere3} & $H_1$ & 192 & 0.0167 & 0.0240 & 1.43 & 5.40 & 6.10 & 1.13 \\
    \texttt{sphere3} & $H_2$ & 192 & 0.154 & 0.441 & 2.86 & 16.7 & 70.2 & 4.20 \\
    \texttt{sphere3} & $H_3$ & 192 & 4.24 & 6.58 & 1.55 & 246 & 86.3 & 0.351 \\
    \texttt{fract\_r} & $H_1$ & 512 & 0.0235 & 0.120 & 5.08 & 7.50 & 16.6 & 2.21 \\
    \texttt{fract\_r} & $H_2$ & 512 & 1.29 & 3.01 & 2.34 & 33.0 & 18.0 & 0.545 \\
    \texttt{o3\_1024} & $H_1$ & 1,024 & 0.0599 & 0.898 & 15.0 & 16.4 & 145 & 8.82 \\
    \texttt{o3\_1024} (thr.\ 1.8) & $H_3$ & 1,024 & 0.491 & 1.36 & 2.76 & 81.5 & 18.2 & 0.223 \\
    \texttt{dragon} & $H_1$ & 2,000 & 0.114 & 0.969 & 8.51 & 30.0 & 98.4 & 3.28 \\
    \texttt{o3\_4096} & $H_1$ & 4,096 & 1.40 & 53.3 & 38.0 & 177 & 8,360 & 47.2 \\
    \texttt{o3\_4096} (thr.\ 1.4) & $H_3$ & 4,096 & 15.8 & 26.1 & 1.65 & 1,070 & 179 & 0.168 \\
    \bottomrule
  \end{tabular}
  \caption{Time and memory comparisons between \redzedvr{} and Ripser for various datasets}
\end{table}

\redzedvr{} is faster than Ripser in all datasets tested except for noisy sphere $S^3$ for $n=430$, where it takes 219 seconds compared to Ripser's 189. 
\redzedvr{} is also more memory efficient in 16/22 of the datasets. 

\redzedvr{} appears to perform best in homology degree 1, where it is both faster and more memory efficient in all cases tested. 
For sufficiently large datasets (ones not affected by startup time), the speedups range from 8.5 to 621 times faster. 
For speed, the improvements are most significant for high-dimensional data, (76.9 times faster for $\texttt{random\_euclidean}(6000,10)$) and data with more topological signal (242, 75.4, and 621 times faster for $\texttt{grid\_spheres}(250,20,1,0.05)$, $\texttt{noisy\_sphere}(3800,1,0.05)$ and $\texttt{random\_distance}(450)$ respectively). 
For memory the biggest improvement is $\texttt{noisy\_sphere}(3800,1,0.05)$, where \redzedvr{} used 321 MiB compared to Ripser's 16,500.
The other $H_1$ datasets ranged from 1.13 to 66.9 times less memory usage. 

The $H_2$ and $H_3$ numbers are not as pronounced, but \redzedvr{} remains generally faster than Ripser. 
For $H_2$ on metric data, speedups range from 2.04 times faster to 14.3, and for the non-metric case \redzedvr{} was 207 times faster.
The only loss on memory for $H_2$ was \texttt{fract\_r}, where \redzedvr{}  was at 33.0 MiB compared to 18.0 MiB for Ripser. 
For $H_3$ \redzedvr{} remains faster for all datasets aside from \texttt{noisy\_sphere}. 
The two non-thresholded memory losses in $H_3$ are $\texttt{sphere3}$ and $\texttt{random\_euclidean}$, the two cases that lack real $H_3$ signal. 
For the two $H_3$ thresholded cases, \redzedvr{} is faster, but loses quite significantly on memory. 

\redzedvr{} also appears to scale better than Ripser in many cases, especially in $H_1$. 
For \texttt{grid\_spheres} $H_1$, at 1500 points \redzedvr{} is 22 times faster than Ripser. At 3000 points this number is 58, and at 5000 points it is 242. 
In degree 1, we see similar results for time for \texttt{noisy\_sphere} (17 at $n=750$, 60 at $n=2000$, and 75 at $n=3800$), \texttt{random\_distance} (15 at $n=150$, 176 at $n=275$, and 621 at $n=450$) and \texttt{random\_euclidean} $\mathbb{R}^{10}$ (15 for $n=1400$, 43 for $n=2900$, and 77 for $n=6000$).
Memory also appears to be scaling better in degree 1 for \texttt{noisy\_sphere}, \texttt{random\_euclidean} $\mathbb{R}^{10}$, and \texttt{random\_distance}.

To highlight the power of active enumeration, we also test \redzed{}, the version without active enumeration, on a few datasets: 

\begin{table}[htbp]
  \centering
  \setlength{\tabcolsep}{4pt}
  \label{tab:ablation}
  \begin{tabular}{llrrrrrrr}
    \toprule
    & & & \multicolumn{3}{c}{Time (s)} & \multicolumn{3}{c}{Peak memory (MiB)} \\
    \cmidrule(lr){4-6} \cmidrule(lr){7-9}
    Dataset & & $n$ & \redzedvr{} & Ripser & \redzed{} & \redzedvr{} & Ripser & \redzed{} \\
    \midrule
    grid spheres, $S^{1}$ & $H_1$ & 560 & 0.0256 & 0.104 & 4.76 & 7.50 & 12.2 & 2,140 \\
    noisy sphere, $S^{1}$ & $H_1$ & 540 & 0.0306 & 0.330 & 7.12 & 9.60 & 72.0 & 3,490 \\
    random distance & $H_1$ & 450 & 0.404 & 251 & 5.62 & 30.9 & 2,070 & 2,320 \\
    random euclidean, $\mathbb{R}^{10}$ & $H_1$ & 525 & 0.0243 & 0.0985 & 2.89 & 7.10 & 10.2 & 1,620 \\
    random euclidean, $\mathbb{R}^{2}$ & $H_1$ & 500 & 0.0246 & 0.0497 & 3.76 & 7.00 & 8.20 & 1,750 \\
    grid spheres, $S^{2}$ & $H_2$ & 160 & 0.0352 & 0.0982 & 2.39 & 7.70 & 7.10 & 1,080 \\
    noisy sphere, $S^{2}$ & $H_2$ & 145 & 0.0460 & 0.106 & 4.29 & 9.00 & 13.7 & 1,930 \\
    random distance & $H_2$ & 158 & 0.352 & 58.1 & 8.59 & 37.3 & 1,050 & 3,790 \\
    random euclidean, $\mathbb{R}^{10}$ & $H_2$ & 145 & 0.0224 & 0.0480 & 0.263 & 6.30 & 5.70 & 143 \\
    \bottomrule
  \end{tabular}
  \caption{\redzedvr{} vs Ripser vs \redzed{} (no active enumeration)}
\end{table}
Interestingly, for \texttt{random\_distance}, \redzed{} still outperforms Ripser; however, it is still significantly outperformed by \redzedvr{}. In all other cases, \redzed{} is outperformed by both \redzedvr{} and Ripser by several orders of magnitude. 
This highlights where the gains come from: the improvements are due to active enumeration, but \redzed{} is a necessary step to get there. 

\section{Conclusion and future directions} \label{sec:conclusion}

In this paper, we introduce two algorithms, \redzed{} and \redzedvr{}. 
In \cref{sec:algo}, we introduce \redzed, an abstract phrasing and generalization of part of the algorithm introduced by \cite{Dey-Fan-Wang}. 
We also provide a translation between \redzed{} and the standard persistence algorithm, noting that it can be viewed as the persistence pairing algorithm with compression, along with exhaustive and retrospective reduction.

We use this abstract phrasing in \cref{sec:active}, where we discuss \textit{active enumeration}. 
Our phrasing of the algorithm makes active enumeration a natural further improvement. 
Namely, the keys in $R_n$ exactly identify the active $n$-simplices. 
The benefit of active emueration is that it identifies $(n+1)$-simplices that may contribute to $n$-th homology, allowing one to ignore almost all birth $(n+1)$-simplices, which are the main bottleneck when \redzed{} is implemented directly in this case.
In the worst case, we avoid enumerating over $97\%$ of all $(n+1)$-birth simplices, and in general this number is greater than $99\%$.
We call the algorithm with active enumeration implemented \redzedvr.
The details of our implementation of \redzedvr{} are discussed in \cref{sec:implementation}.

Finally, in \cref{sec:experiments}, we compare computation times for \redzedvr{} to Ripser \cite{bauer:ripser} in the case of Vietoris--Rips filtrations. 
In general, \redzedvr{} is both faster and uses less peak memory than Ripser. 
These results are especially pronounced in degree 1, where speedups range from 8.5 to 621 times faster, and memory improvements range from 1.1 to 71 times less memory used. 
The improvements are less pronounced, yet still present in higher degrees; however, there are cases where Ripser outperforms \redzedvr{} on either time or memory. 

There are several questions and directions that can be considered in future work.
While in \cref{sec:algo}, we developed an abstract underpinning of the persistence pairing algorithm with compression, and retrospective/exhaustive reduction, one can also try to view it through the lens of discrete Morse theory \cite{forman-morse,mischaikow-nanda-morse,scoville:discrete-morse-theory}.

Another natural direction is to identify other settings where active enumeration might apply.
These include discrete homology \cite{barcelo-capraro-white}, cubical homology \cite{kaczynski-mischaikow-mrozek:cubical}, as well as homologies of other filtrations such as Delaunay or Cech filtrations \cite{edelsbrunner-mucke-alpha-shapes, bauer-edelsbrunner-cech-delaunay}. 
Finally, it is reasonable to expect that the abstract framing of \redzed{} allows for further improvements beyond active enumeration.

\bibliographystyle{amsalphaurlmod}
\bibliography{all-refs}

%Uncomment the following if you would like an appendix
\appendix
\renewcommand{\thesection}{\Alph{section}}
%\begin{appendices}
%\section{Code for computing monoidal products}
%  \lstinputlisting[language=Python]{Graph Products.py}
%\end{appendices}
% End of appendix
%\newpage
% Uncomment the following if you have a bibliography file

\end{document}